\Crefname{figure}{Fig.}{Figs.}
\Crefname{equation}{Equation}{Equations}
\Crefname{theorem}{Theorem}{Theorems}
\Crefname{lemma}{Lemma}{Lemmas}
\Crefname{claim}{Claim}{Claims}
\crefname{algocf}{Algorithm}{Algorithms}
\def\ifdiff#1#2{\ifnum#1=#2\relax 0\else 1\fi}
\def\ps@pprintTitle{%
  \let\@oddhead\@empty
  \let\@evenhead\@empty
  \let\@oddfoot\@empty
  \let\@evenfoot\@oddfoot}
\let\oldnl\nl
\newcommand{\nonl}{\renewcommand{\nl}{\let\nl\oldnl}}
\newcommand{\pname}[1]{\textcolor{black}{\textsc{#1}}}
\newtcolorbox[auto counter]{ProblemBox}[2][]{
	colframe=black!20!white,
	coltitle=black,
	arc=1.5mm,
	boxrule=.2mm,
	colbacktitle=white,
	colback=white,
	enhanced,
	adjusted title=flush left,
	attach boxed title to top center={yshift=-3mm,xshift=-3.5cm},
	boxed title style={colframe=white,righttitle=-3mm,lefttitle=-3mm},
	title=#2,#1
}
\newcommand{\problemClassic}[5]{
	\begin{ProblemBox}[label={#4},nameref={\pname{#5}}]{\pname{#1}}
		\begin{tabularx}{\textwidth}{rX}
			\textnormal{Input:} & \textnormal{#2} \\
                \textnormal{Question:} & \textnormal{#3}
		\end{tabularx}
		\vspace{-3,5mm}
	\end{ProblemBox}
}
\newcommand{\Nat}{\mathbb{N}}
\newtheorem{theorem}{Theorem}
\newtheorem{lemma}[theorem]{Lemma}
\newtheorem{claim}[theorem]{Claim}
\newtheorem{definition}[theorem]{Definition}
\DeclareMathOperator{\adj}{adj}
\newcommand{\qedclaim}{\hfill $\diamond$ \medskip}
\newenvironment{proofclaimfinal}{\noindent\emph{Proof of Claim.}}{}
\newenvironment{proofclaim}{\noindent\emph{Proof of Claim.}}{\qedclaim}
\journal{Discrete Applied Mathematics}
\begin{document}

\makeatletter
\DeclareRobustCommand*{\nameref}{%
\color{black}%
        \@ifstar\T@nameref\T@nameref
        }%
\makeatother

\begin{frontmatter}

\title{Parameterized complexity of the \texorpdfstring{$f$}{f}-Critical Set problem}

\author[UFCA]{Thiago Marcilon\corref{cor1}}
\ead{thiago.marcilon@ufca.edu.br}
\author[UFCA]{Murillo Inácio da Costa Silva}
\ead{murillo.inacio@aluno.ufca.edu.br}
\affiliation[UFCA]{organization={CCT, Universidade Federal do Cariri},
            city={Juazeiro do Norte},
            country={Brazil}}
\cortext[cor1]{Corresponding author}

\begin{abstract}
Given a graph~$G=(V,E)$, and a function~$f:V(G) \rightarrow \Nat$, an~$f$-reversible process on~$G$ is a dynamical system such that, given an initial vertex labeling~$c_0 : V(G) \rightarrow \{0,1\}$, every vertex~$v$ changes its label if and only if it has at least~$f(v)$ neighbors with the opposite label. The updates occur synchronously in discrete time steps $t=0,1,2,\ldots$. An $f$-critical set of $G$ is a subset of vertices of $G$ whose initial label is~$1$ such that, in an~$f$-reversible process on~$G$, all vertices reach label~$1$ within one time step and then remain unchanged. The critical set number~$r^c_f(G)$ is the minimum size of an $f$-critical set of $G$. Given a graph $G$, a threshold function $f$, and an integer $k$, the $f$-\textsc{Critical Set} problem asks whether~$r^c_f(G) \leq k$. We prove that this problem is $\NP$-complete for planar subcubic bipartite graphs with maximum threshold $m(f) = 2$ and $\W[1]$-hard when parameterized by the treewidth $tw(G)$ of $G$. Additionally, we show that the problem is $\FPT$ when parameterized by $tw(G)+m(f)$, $tw(G)+\Delta(G)$, and $k$, where $\Delta(G)$ denotes the maximum degree of~$G$. Finally, we present two kernels of sizes $O(k \cdot m(f))$ and $O(k \cdot \Delta(G))$.
\end{abstract}

\begin{keyword}
dynamical systems on graphs \sep threshold process \sep reversible process \sep critical set
\end{keyword}

\end{frontmatter}

\section{Introduction}
\label{sec:intro}

Let $G=(V,E)$ be a simple, finite, and undirected graph with~$n$ vertices and~$m$ edges. A \emph{discrete dynamical process on~$G$} is an infinite sequence $c = (c_{t})_{t \in \Nat} = (c_0, c_1, \dotsc)$ of \emph{configurations} $c_t : V(G) \rightarrow \{0,1\}$. We say that~$c_0$ is the \emph{initial configuration} of~$c$, and $c_t(v)$ denotes the \emph{state} of vertex~$v$ at time~$t\in \Nat$. The transition from~$c_t$ to~$c_{t+1}$ occurs under the same rule at each time step. This approach is employed to model real-world processes in a wide variety of areas such as social influence~\cite{Chen,French,Kempe,Poljak1}, gene expression networks~\cite{Huang}, immune systems~\cite{Agur}, cellular automata~\cite{Allouche}, marketing strategies~\cite{Dreyer,Kempe}, opinion and disease dissemination~\cite{opinion,disease-example}, simulations of biological cell populations~\cite{KnutsonThesis}, neural networks~\cite{Goles}, local interaction games~\cite{Montanari,Morris}, and distributed computing~\cite{Geurts,Roncato,Peleg}.

In this study, we consider a type of iterative process on simple graphs that generalizes the majority-voting approach studied by Peleg~\cite{Peleg}. In this type of process, the states of the vertices at each time step change according to a \textit{threshold function} $f:V(G) \rightarrow \Nat$: all vertices change their states simultaneously, and a vertex $v$ changes its state at time $t+1$ if and only if it has at least~$f(v)$ neighbors with the opposite state at time~$t \in \Nat$. Denoting the \textit{neighborhood} of~$v$ in $G$ by~$N_G(v)$ (we omit the subscript in the notation when the graph is clear from the context), the update rule for each vertex $v$ is formally defined as
\begin{equation}
c_{t+1}(v) = \begin{cases}
1-c_{t}(v) & \text{, if } |\{u \in N(v) \mid c_{t}(u) \neq c_{t}(v)\}| \geq f(v);\\
c_t(v) & \text{, otherwise.}
\end{cases}
\tag{R1}\label{eq:updRule}
\end{equation}

We say that~$f(v)$ is the \emph{threshold} of~$v$, which does not change during the process. Given an initial configuration~$c_0$, we define an~\emph{$f$-reversible process on~$G$}, denoted by~$(c_{t})_{t \in \Nat}$ (or simply~$c$ if the graph and threshold function are known in the context), as an iterative process that follows Rule~\eqref{eq:updRule}. If~$f$ is a~$k$-constant function, the process is called a \emph{$k$-reversible process on $G$}, where $k\in \Nat$. We also say that an~$f$-reversible process~$c$ on a graph~$G=(V,E)$ is \emph{started} by the vertex subset~$S = \{v \mid c_0(v)=1\}$.

An $f$-reversible process is analogous to the majority voting approach studied by Mustafa and Peke\v{c}~\cite{Mustafa} with $f(v)= \Bigl \lfloor \frac{d_G(v)}{2} \Bigr \rfloor +1$, where~$d_G(v)$ denotes the degree of~$v$ in the graph $G$ (we omit the subscript in the notation when the graph is clear from the context). Several studies have investigated iterative processes on graphs under the constraint that each vertex is allowed to change its state only from $0$ to $1$ during the process~\cite{Dreyer,Geurts,Kempe}. Such processes are referred to as \emph{$f$-irreversible processes} in a graph~$G$ with threshold function~$f$.

In an $f$-reversible process~$c$, due to \Cref{eq:updRule}, every vertex considers only its own state and the states of its neighbors to define its next state at each time step. Therefore, we consider only connected graphs in this study. Moreover, if~$f(v)>d(v)$ for some vertex~$v$, then~$c_0(v) = c_t(v)$, for all~$t > 0$. Hence, for each vertex $v$, we consider only threshold values in the set~$\{0, \ldots, d(v)+1\}$. Note that if~$f(v)=0$, then the state of~$v$ changes at every time step~$t \in \Nat$.

We say that an~$f$-reversible process on $G$ \emph{converges} at time $t \in \Nat$ if each vertex in $G$ has state~1 at every time $t' \geq t$, and there is at least one vertex with state~$0$ at every time $t' < t$. A set $S \subseteq V$ is called an $f$-\emph{conversion set} of~$G$ if the $f$-reversible process started by~$S$ eventually converges at some time $t$. The \emph{conversion set number}~$r_f(G)$ is the minimum size of an $f$-conversion set of~$G$. For any integer~$k > 0$, if~$f$ is a $k$-constant function, then $f$ may be replaced by~$k$ in all related terminology; for example, an $f$-conversion set can be called a $k$-conversion set, and the conversion set number can be denoted by~$r_k(G)$.

It is clear that~$r_1(G) = n$ for any graph~$G$. Dourado~\emph{et al.}~\cite{reversible2012} proved that determining whether $r_2(G) \leq k$ is $\NP$-hard, presented an upper bound for~$r_2(T)$ for trees~$T$ and proposed a quadratic dynamic programming algorithm to compute~$r_f(P)$ on paths~$P$ and threshold values in~$\{1,2\}$, but for restricted threshold functions, leaving open the general case. Lima~\emph{et al.}~\cite{lima2018} improved the hardness result of Dourado~\emph{et al.} by showing that the same problem is actually $\NP$-complete (not only $\NP$-hard) even for bipartite graphs of maximum degree~3 and threshold values in~$\{1,2,3\}$. Costa, Marcilon, and Lima~\cite{COSTA2023} presented a polynomial algorithm to compute~$r_f(P)$ on paths $P$ and any threshold function, improving the result for paths of Dourado~\emph{et al.}~\cite{reversible2012}. They also showed that the problem is $\W[1]$-hard when parameterized by the combination of $k$ and the treewidth of the input graph.

A concept closely related to $f$-conversion sets is that of an $f$-critical set. A set $S \subseteq V$ is called an $f$-\emph{critical set} of $G$ if the $f$-reversible process started by~$S$ converges at time $t \leq 1$. The \emph{critical set number}~$r^c_f(G)$ is the minimum size of an $f$-critical set of~$G$. We now introduce the decision problem that is the focus of this study.

\problemClassic{$f$-Critical Set}{A simple graph~$G$, a threshold function $f:V(G) \rightarrow \Nat$, and an integer~$k>0$.}{$r^c_f(G) \leq k$?}{prob:criticalSet}{Critical Set}

If $f(v) = 0$ for some vertex $v$, then $G$ has no $f$-critical set, since the state of $v$ changes at every time step. Therefore, from now on, let us assume that $1 \leq f(v) \leq d_G(v)+1$ for every vertex $v \in V(G)$. Also, if $G=(V,E)$ is a simple graph, let $tw(G)$ denote the treewidth of $G$, and $\Delta(G)$ denote the maximum degree of a vertex in $G$. For any $S \subseteq V$, let $G[S]$ denote the subgraph of $G$ induced by $S$. Let $m(f) = \max\limits_{v \in V} f(v)$ be the maximum threshold of a vertex of $G$, where $f$ is a threshold function of~$G$.

The critical set number has been extensively studied in the context of $f$-irreversible processes, given that it relates to the interval number (or geodetic number) of a graph in the $P_3$ convexity~\cite{penso2015}, the $k$-domination number~\cite{fink1985}, and the TSS-Domination in the context of the \textsc{Target Set Selection} problem~\cite{ARAUJO2023}. However, its investigation in the context of $f$-reversible processes is recent. Fernandes \emph{et al.}~\cite{fernandes2024freversible} presented a linear algorithm for the $f$-\nameref{prob:criticalSet} problem restricted to paths. Lima \emph{et al.}~\cite{lima2018} showed the $\NP$-completeness of the problem for subcubic bipartite graphs and $m(f) \leq 3$. Although their result is about the problem regarding $f$-conversion sets, their reduction also holds for the $f$-\nameref{prob:criticalSet} problem. To the best of our knowledge, beyond these two results, no complexity results or broader structural characterizations are known. This lack of general understanding motivated the present study.

Since the classical complexity of the problem is already intractable in very restricted graph classes, we follow a parameterized approach. The parameters we consider reflect both global and local structural aspects of the reversible processes. The value $k$ limits the size of the $f$-critical set, $tw(G)$ captures the global structure of the graph, and $m(f)$ and $\Delta(G)$ control the local constraints regarding the change in states. Therefore, these parameters are natural choices for studying the parameterized complexity of the $f$-\nameref{prob:criticalSet} problem.

\paragraph{Our contributions} In this paper, we establish new complexity results for the $f$-\nameref{prob:criticalSet} problem with respect to the parameters $k$, $tw(G)$, $m(f)$, and $\Delta(G)$.
\begin{itemize}
    \item In \Cref{sec:hardness}, we prove that the problem is $\NP$-complete even when restricted to planar subcubic bipartite graphs with maximum threshold~$2$, thus strengthening the result of Lima \emph{et al.}~\cite{lima2018};
    \item Also in \Cref{sec:hardness}, we prove that the problem is $\W[1]$-hard when parameterized by~$tw(G)$;
    \item In \Cref{sec:tractable}, we show that the problem is $\FPT$ when parameterized by $tw(G)+m(f)$ and $tw(G)+\Delta(G)$;
    \item In \Cref{sec:tractablek}, we present an $\FPT$ algorithm for the problem parameterized by~$k$;
    \item Also in \Cref{sec:tractablek}, we present a polynomial kernel for the parameters $k+m(f)$ and $k+\Delta(G)$.
\end{itemize}  
By combining hardness and tractability results, we delineate the complexity of the problem within the framework of Parameterized Complexity Theory with respect to the parameters $tw(G)$, $m(f)$, $\Delta(G)$ and $k$. Our results show that determining whether $r_f^c(G) \leq k$ is generally easier than determining whether $r_f(G) \leq k$: the latter is $\W[2]$-hard when parameterized by $k$, whereas the former is fixed-parameter tractable.

\paragraph{Complexity landscape}
\Cref{tab:results} summarizes the parameterized complexity landscape for the $f$-\nameref{prob:criticalSet} problem in both settings, reversible and irreversible processes, for the parameters $k, tw(G), m(f)$ and $\Delta(G)$, including our results.

\begin{table}[H]
    \caption{\label{tab:results}Summary of the parameterized complexity landscape for the $f$-\nameref{prob:criticalSet} problem.}
\[
\bgroup
\def\arraystretch{1.35}
\begin{array}{|c|c|c|}
\hline
\text{\textbf{Parameter}} & \text{\textbf{Reversible processes}} & \text{\textbf{Irreversible processes}} \\ \hline
k & \FPT\, (\text{\Cref{thm:fptk}}) & \W[2]\text{-hard~\cite{penso2015,ARAUJO2023,CAPPELLE2022}} \\
tw(G) & \W[1]\text{-hard}\, (\text{\Cref{thm:hardtw}}) & \text{open} \\
k+m(f) & \FPT\, (\text{\Cref{thm:fptkmf}}) & \W[2]\text{-hard~\cite{ARAUJO2023}} \\
k+\Delta(G) & \FPT\, (\text{\Cref{thm:fptkmf}}) & \FPT~\text{\cite{ARAUJO2023}}\\
k+tw(G) & \FPT\, (\text{\Cref{thm:fptk}}) & \FPT~\text{\cite{ARAUJO2023}}\\
m(f)+\Delta(G)& \text{para-}\NP\text{-comp.} (\text{\cite{lima2018}, \Cref{thm:paraplanarbip}}) & \text{para-}\NP\text{-comp.}~\text{\cite{penso2015}}\\
tw(G) + m(f) & \FPT\, (\text{\Cref{thm:fpttwmf}}) & \text{open} \\
tw(G) + \Delta(G) & \FPT\, (\text{\Cref{thm:fpttwmf}}) & \text{open} \\
\hline
\end{array}
\egroup
\]
\end{table}
\section{Hardness results for the parameters \texorpdfstring{$tw(G), m(f)$}{tw(G), m(f)} and \texorpdfstring{$\Delta(G)$}{Δ(G)}}
\label{sec:hardness}

In this section, we prove two hardness results for the $f$-\nameref{prob:criticalSet} problem. We show that it is $\NP$-complete even when restricted to planar subcubic bipartite graphs with maximum threshold~$2$, and that it is $\W[1]$-hard when parameterized by $tw(G)$.

\begin{theorem}
\label{thm:paraplanarbip}
$f$-\nameref{prob:criticalSet} is $\NP$-complete for planar subcubic bipartite graphs with maximum threshold~$2$.
\end{theorem}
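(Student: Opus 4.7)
Membership in $\NP$ is immediate: given a candidate set $S$, one verifies in polynomial time that every $v \in S$ has at most $f(v)-1$ neighbors outside $S$ (so $v$ does not flip at time $1$) and every $v \notin S$ has at least $f(v)$ neighbors in $S$ (so $v$ flips to state $1$ at time $1$). Together these conditions give $c_1 \equiv 1$, which is trivially a fixed point of Rule~\eqref{eq:updRule}, so the process converges at time $\leq 1$.

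For hardness, the plan is to reduce from a natural planar NP-hard problem---specifically, \textsc{Planar 3-SAT} with each variable occurring in at most three clauses, or equivalently from \textsc{Planar Cubic Vertex Cover}, both of which are known to be $\NP$-complete. The reduction builds a planar, subcubic, bipartite graph assembled from four kinds of gadgets, all with thresholds in $\{1,2\}$. A \emph{variable gadget} is realized as an even cycle of $f=2$ vertices whose two minimum-cost $f$-critical selections encode the truth values of the variable. A \emph{wire gadget} is an even path with alternating thresholds in $\{1,2\}$ that propagates the selected pattern of a variable gadget towards each clause in which the variable appears. A \emph{clause gadget} is a small bipartite tree whose vertices can be completed within a prescribed local budget if and only if at least one of the attached wires arrives in the ``true'' configuration. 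Finally, a \emph{splitter gadget}---a small bipartite structure of $f\in\{1,2\}$ vertices---fans out a single variable to up to three clauses without exceeding degree $3$ and enforces the same truth value on all outgoing wires.

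These gadgets are laid out following a planar embedding of the input instance, so that the resulting graph is simultaneously planar, bipartite (by making every internal cycle and connecting path of even length), and subcubic. The target value $k$ is set to the sum of the minimum-cost local budgets of all gadgets for a satisfying assignment. Correctness then follows from a routine case analysis: any satisfying truth assignment yields an $f$-critical set of size exactly $k$, while any $f$-critical set of size at most $k$ must pick one of the two valid patterns in every variable cycle (otherwise its local budget is exceeded) and must satisfy every clause gadget locally, which by construction forces at least one true literal per clause.

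The main obstacle is engineering the clause and splitter gadgets so that they simultaneously respect subcubicity, bipartiteness, planarity, and---most restrictively---the tight threshold bound $m(f)\leq 2$. The standard trick of using a single high-threshold ``OR''-vertex to encode clause satisfaction is unavailable here, so the disjunction must be simulated via a bipartite tree of $f=2$ vertices whose branching stays within degree $3$ and whose parities match the surrounding wires. Tuning the lengths and parities of the wires so that the ``true'' pattern of a variable gadget reaches each clause in a form the gadget can exploit---without introducing odd cycles, degree-$4$ junctions, or a crossing---is the delicate step; together with the budget bookkeeping, it is where the bulk of the work will lie.
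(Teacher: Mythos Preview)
Your proposal is a plan rather than a proof: you describe the \emph{kinds} of gadgets you would need (variable cycles, wires, splitters, clause trees) but do not actually construct any of them, and you explicitly concede that ``the bulk of the work will lie'' in engineering the clause and splitter gadgets and in the budget bookkeeping. That is precisely the part that needs to be on the page for this to count as a proof. In particular, the clause gadget is the real question mark: with $m(f)\le 2$ you cannot use a high-threshold OR-vertex, and you have not shown a concrete bipartite subcubic tree of $f\in\{1,2\}$ vertices that (i) admits a completion within the local budget whenever at least one incoming wire is ``true'', (ii) exceeds the budget otherwise, and (iii) does not allow cheating by putting the gadget's own vertices into $S$. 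Until that gadget is written down and verified, there is no reduction.

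The paper sidesteps all of this. It reduces directly from \textsc{Planar Vertex Cover}, not from SAT, and the construction is far simpler than your outline: each edge $e=uv$ of $G$ becomes a single $f=2$ vertex, each vertex $v$ becomes a path $P_v$ of $2\Delta(G)-1$ threshold-$1$ vertices, and the edge-vertex $e$ is attached to odd-indexed positions in $P_u$ and $P_v$. Pendant $f=2$ leaves pad the remaining positions so every internal path vertex has degree $3$. The key observation is that in any $f$-critical set each $P_v$ is taken either entirely or not at all (two adjacent threshold-$1$ vertices in opposite states would flip), and each edge-vertex, having threshold $2$, forces at least one of its two incident paths to be fully selected---which is exactly vertex cover. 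There are no wires, no splitters, no clause gadgets, and no delicate parity or budget arguments. If you want a complete proof, I would recommend abandoning the SAT route and reproducing this vertex-cover reduction instead.
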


\begin{proof}
The problem is clearly in $\NP$. To prove that the problem is $\NP$-hard, we show a reduction from \textsc{Planar Vertex Cover}, which is a known $\NP$-complete problem~\cite{garey1977}. It consists in determining whether a planar graph $G$ has a vertex cover of size at most $k$, that is, a set $S \subseteq V(G)$ such that each edge in $G$ has at least one endpoint in $S$. Let $(G=(V,E),k)$ be an instance of \textsc{Planar Vertex Cover}, and $(G'=(V',E'),f,k')$ be the constructed instance of $f$-\nameref{prob:criticalSet}.

The construction of $G'$ and $f$ is as follows:
\begin{itemize}
\item For each $e \in E$, add the vertex $e$ in $G'$ with threshold $2$. Let $F$ be the set of vertices added in this step;
\item For each $v \in V$, add a path with $2\Delta(G)-1$ vertices in $G'$ all of which have threshold $1$. Let $P_v = \{v_1,v_2,\ldots,v_{2\Delta(G)-1}\}$ be the set of vertices in the path $v_1,v_2,\ldots,v_{2\Delta(G)-1}$ and $P = \bigcup\limits_{v \in V} P_v$. Note that $P_v$ has $\Delta(G)$ vertices with odd indices: $v_1,v_3,\ldots,v_{2\Delta(G)-1}$. Let $P_v^e$ and $P_v^o$ be the sets of vertices in $P_v$ with even and odd indices, respectively. Furthermore, let $P^e = \bigcup\limits_{v \in V} P_v^e$ and $P^o = \bigcup\limits_{v \in V} P_v^o$;
\item For each $e=vu \in E$, add in $G'$ an edge between $e$ and $v_i$ and between $e$ and $u_{j}$ where $i$ and $j$ are any odd indices in a way that, after adding all such edges, in every set $P_v$, each vertex in $P_v$ has degree at most $3$ and degree $2$ if it is in $P_v^e$;
\item For each vertex $v_i$ in some $P_v$ that does not have a neighbor in $F$, add another vertex $v'_i$ in $V'$ with threshold $2$ and an edge between $v_i$ and $v'_i$. Let $Q$ be the set of all vertices added in this step; $Q_v^e$ be the set of vertices in $Q$ adjacent to a vertex in $P_v^e$; $Q_v^o$ be the set of vertices in $Q$ adjacent to a vertex in $P_v^o$; $Q^e = \bigcup\limits_{v \in V} Q_v^e$; and $Q^o = \bigcup\limits_{v \in V} Q_v^o$. After this last step, every vertex in any $P_v$ has degree $3$, except for $v_1$ and $v_{2\Delta(G)-1}$, which have degree $2$.
\end{itemize}

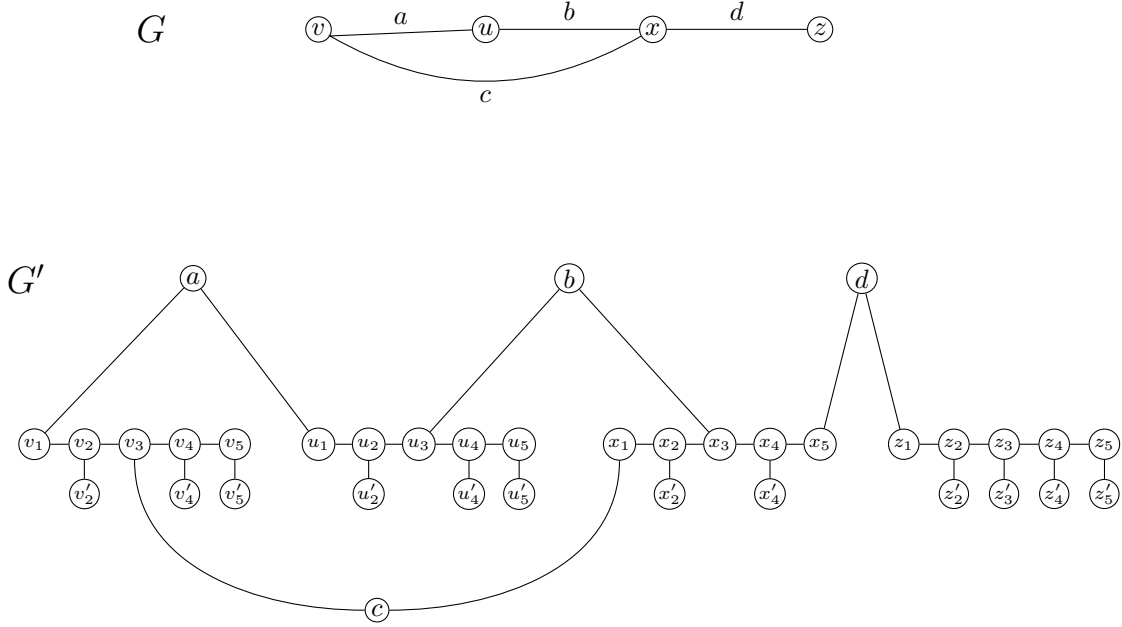
\begin{figure}[htb]
    \centering
    \begin{tikzpicture}[scale=1.1,auto,font=\scriptsize]
        \tikzset{vertex/.style={draw, circle, minimum size=4pt,inner sep=1pt}};

        \node () at (-5,4) {\Large{$G$}};
        \node () at (-6.5,1) {\Large{$G'$}};

        \node[vertex] (v) at (-3,4) {\normalsize{$v$}};
        \node[vertex] (u) at (-1,4) {\normalsize{$u$}};
        \node[vertex] (x) at (1,4) {\normalsize{$x$}};
        \node[vertex] (z) at (3,4) {\normalsize{$z$}};
        
        \draw[-] (x) to[bend left] node {\normalsize{$c$}} (v) to node {\normalsize{$a$}} (u) to node {\normalsize{$b$}} (x) to node {\normalsize{$d$}} (z);    
        \node[vertex] (a) at (-4.5,1) {\normalsize{$a$}};
        \node[vertex] (c) at (-2.3,-3) {\normalsize{$c$}};
        \node[vertex] (b) at (0,1) {\normalsize{$b$}};
        \node[vertex] (d) at (3.5,1) {\normalsize{$d$}};

        \node[vertex] (v1) at (-6.4,-1) {$v_1$};
        \node[vertex] (v2) at (-5.8,-1) {$v_2$};
        \node[vertex] (v3) at (-5.2,-1) {$v_3$};
        \node[vertex] (v4) at (-4.6,-1) {$v_4$};
        \node[vertex] (v5) at (-4,-1) {$v_5$};
        \node[vertex,inner sep=0pt] (v2l) at (-5.8,-1.6) {$v'_2$};
        \node[vertex,inner sep=0pt] (v4l) at (-4.6,-1.6) {$v'_4$};
        \node[vertex,inner sep=0pt] (v5l) at (-4,-1.6) {$v'_5$};
        \draw[-] (v2) -- (v2l);
        \draw[-] (v4) -- (v4l);
        \draw[-] (v5) -- (v5l);
        \draw[-] (v1) -- (v2) -- (v3) -- (v4) -- (v5);        

        \node[vertex] (u1) at (-3,-1) {$u_1$};
        \node[vertex] (u2) at (-2.4,-1) {$u_2$};
        \node[vertex] (u3) at (-1.8,-1) {$u_3$};
        \node[vertex] (u4) at (-1.2,-1) {$u_4$};
        \node[vertex] (u5) at (-0.6,-1) {$u_5$};
        \node[vertex,inner sep=0pt] (u2l) at (-2.4,-1.6) {$u'_2$};
        \node[vertex,inner sep=0pt] (u4l) at (-1.2,-1.6) {$u'_4$};
        \node[vertex,inner sep=0pt] (u5l) at (-0.6,-1.6) {$u'_5$};
        \draw[-] (u2) -- (u2l);
        \draw[-] (u4) -- (u4l);
        \draw[-] (u5) -- (u5l);
        \draw[-] (u1) -- (u2) -- (u3) -- (u4) -- (u5);
        
        \node[vertex] (x1) at (0.6,-1) {$x_1$};
        \node[vertex] (x2) at (1.2,-1) {$x_2$};
        \node[vertex] (x3) at (1.8,-1) {$x_3$};
        \node[vertex] (x4) at (2.4,-1) {$x_4$};
        \node[vertex] (x5) at (3,-1) {$x_5$};
        \node[vertex,inner sep=0pt] (x2l) at (1.2,-1.6) {$x'_2$};
        \node[vertex,inner sep=0pt] (x4l) at (2.4,-1.6) {$x'_4$};
        \draw[-] (x2) -- (x2l);
        \draw[-] (x4) -- (x4l);
        \draw[-] (x1) -- (x2) -- (x3) -- (x4) -- (x5);

        \node[vertex] (z1) at (4,-1) {$z_1$};
        \node[vertex] (z2) at (4.6,-1) {$z_2$};
        \node[vertex] (z3) at (5.2,-1) {$z_3$};
        \node[vertex] (z4) at (5.8,-1) {$z_4$};
        \node[vertex] (z5) at (6.4,-1) {$z_5$};
        \node[vertex,inner sep=0pt] (z2l) at (4.6,-1.6) {$z'_2$};
        \node[vertex,inner sep=0pt] (z3l) at (5.2,-1.6) {$z'_3$};
        \node[vertex,inner sep=0pt] (z4l) at (5.8,-1.6) {$z'_4$};
        \node[vertex,inner sep=0pt] (z5l) at (6.4,-1.6) {$z'_5$};
        \draw[-] (z2) -- (z2l);
        \draw[-] (z3) -- (z3l);
        \draw[-] (z4) -- (z4l);
        \draw[-] (z5) -- (z5l);
        \draw[-] (z1) -- (z2) -- (z3) -- (z4) -- (z5);

        \draw[-] (v1) -- (a) -- (u1);
        \draw[-] (u3) -- (b) -- (x3);
        \draw[-] (v3) to[in=180,out=270] (c);
        \draw[-] (c) to[in=270,out=0] (x1);
        \draw[-] (x5) -- (d) -- (z1);
    \end{tikzpicture}
    \caption{\label{fig:reduction2criticalSet} $G'$ is the graph resulting from the construction applied to the graph $G$.}
\end{figure}

This construction requires quadratic time. In fact, $|V'| = 2|V|(2\Delta(G)-1) - |E|$ and $|E'| \leq 3|V'|/2$. In \Cref{fig:reduction2criticalSet}, we present an example of this construction, in which graph $G'$ is the result of the construction applied to graph $G$.

The idea of this reduction is as follows: for any $f$-critical set $S$ of $G'$, each $P_v$ is either a subset of $S$ or disjoint from $S$, and for each $e = vw$, either $P_v$ or $P_w$ is subset of $S$. This ensures that the vertices in $G$ represented by the sets $P_v$ in an $f$-critical set of $G'$ form a vertex cover of $G$.

We have that $m(f) \leq 2$ because each vertex has threshold $2$ if it is in $F \cup Q$ and threshold $1$ otherwise. In addition, $G'$ is subcubic because each vertex in $F$ has degree $2$, each vertex in $Q$ has degree $1$, and each vertex in every path $P_v$ has degree $2$ or $3$. $G'$ is bipartite because $V'$ can be partitioned into $P^e \cup F \cup Q^o$ and $P^o \cup Q^e$.

Furthermore, $G'$ is planar because starting at a planar embedding of $G$, we can subdivide each of its edges to obtain the vertices in $F$ and transform each vertex $v \in V$ in the subgraph $G[P_v \cup Q_v]$.

This transformation can be done in the following manner, as depicted in \Cref{fig:planarlittlevertices}: add the vertices in $P_v$ alongside where the circumference of $v$  is drawn, positioning each vertex $v_k$ in $P_v^o$ precisely where each incident edge touches $v$ if $v_k$ has a neighbor in $F$. The vertices in $Q_v$ can be positioned inside the circumference of $v$ near its sole neighbor in $P_v$. Thus, we obtain a planar embedding for~$G'$.

\begin{figure}[htb]
    \centering
    \begin{tikzpicture}[scale=0.85,auto,font=\scriptsize]
        \tikzset{vertex/.style={draw, circle, minimum size=10pt,inner sep=1pt}};

        \node[vertex] (v1) at (60:3){$v_1$};
        \node[vertex] (v2) at (112:3){$v_2$};
        \node[vertex,inner sep=0pt] (v2l) at (112:2){$v'_2$};
        \node[vertex] (v3) at (165:3){$v_3$};
        \node[vertex] (v4) at (217:3){$v_4$};
        \node[vertex,inner sep=0pt] (v4l) at (217:2){$v'_4$};
        \node[vertex] (v5) at (270:3){$v_5$};
        \node[vertex,inner sep=0pt] (v5l) at (270:2){$v'_5$};
        \node (v) at (0,0) {\Huge{$v$}};

        \draw[circle,color=black!20] (-85.5:3) arc (-85.5:55.5:3);
        \draw[circle] (65:3) arc (65:107:3);
        \draw[circle] (117:3) arc (117:160:3);
        \draw[circle] (170:3) arc (170:212:3);
        \draw[circle] (222:3) arc (222:265:3);

        \draw[-,color=black!20] (v1) -- (60:4.5);
        \draw[-] (v2) -- (v2l);
        \draw[-,color=black!20] (v3) -- (165:4.5);
        \draw[-] (v4) -- (v4l);
        \draw[-] (v5) -- (v5l);
    \end{tikzpicture}
    \caption{\label{fig:planarlittlevertices} The vertices $v_1$ and $v_3$ are positioned in the circumference of $v$ precisely where the incident edges touch $v$, assuming $d_G(v) = 2$ and $\Delta(G)=3$. The other vertices are positioned between or after them. In black, we have the subgraph $G'[P_v \cup Q_v]$.}
\end{figure}
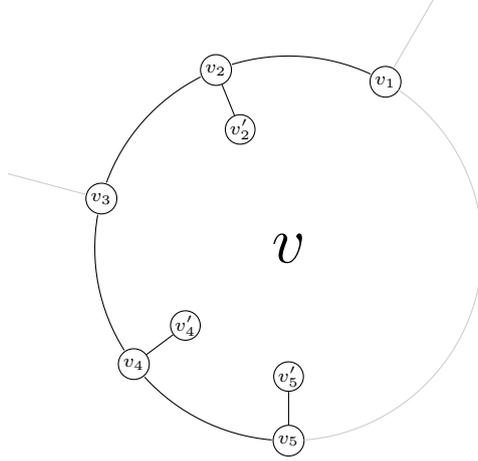

We finish by proving the following claim.

\begin{claim}
\label{claim:equivmfD}
$G$ has a vertex cover of size at most $k$ if and only if $G'$ has an $f$-critical set of size at most $k' = |F| + |Q| + k \cdot (2\Delta(G)-1)$.
\end{claim}

\begin{proofclaimfinal}
Assume that $G=(V,E)$ has a vertex cover $S$ of size at most $k$. Let $S' = F \cup Q \cup \bigcup\limits_{v \in S} P_v$. It follows that $|S'| = |F| + |Q| + |S| \cdot (2\Delta(G)-1) \leq k'$. Furthermore, $S'$ is an $f$-critical set of $G'$ since each vertex in $F \cup Q$ has threshold $2$ and has at most one neighbor in $V' \setminus S'$; all neighbors of each vertex in $S' \setminus (F \cup Q)$ are also in $S'$; and each vertex in $V' \setminus S'$ has at least one neighbor in $S'$ and threshold $1$.

Conversely, assume that $G'$ has an $f$-critical set $S'$ of size at most $k'$, and let $c$ be the $f$-reversible process on $G'$ started by $S'$. Since $f(v) > d(v)$ for each vertex in $Q$, $Q \subseteq S'$. Assume that $v \notin S'$ for some $v \in F$. On the one hand, its two neighbors in $P$ are not in $S'$ because both of them have threshold equal to $1$ and, otherwise, their states would be $0$ at time $1$ in $c$. On the other hand, its two neighbors in $P$ are in $S'$ because $v$ itself has threshold $2$ and, otherwise, its state would still be $0$ at time $1$ in $c$, a contradiction. It follows that $F \subseteq S'$.

Furthermore, for some $P_v$, if $u',w' \in P_v$ are two vertices such that $u' \in S'$ and $w' \notin S'$, then there are two neighbors $u,w \in P_v$ such that $u \in S'$ and $w \notin S'$, since $G'[P_v]$ is connected. This implies that $u$ has state $0$ at time $t=1$ in $c$ since $f(u) = 1$, a contradiction. Therefore, there is a vertex $v \in P_v \cap S'$ if and only if $P_v \subseteq S'$. 

With this, we can conclude that $S' = F \cup Q \cup \bigcup\limits_{v \in S} P_v$ for some $S \subseteq V$. Since $|S'| \leq k' = |F| + |Q| + k \cdot (2\Delta(G)-1)$ and each set $P_v$ has size $2\Delta(G)-1$, it follows that $|S| \leq k$. Furthermore, since $S'$ is an $f$-critical set of $G'$ and each vertex in $F$ has threshold $2$, each vertex $e \in F$ must have at least one neighbor in $P \cap S'$, say $v_k \in P_v$, which implies that $v \in S$. This means that for each edge $e$ in $G$, there is a vertex $v \in S$ such that $e$ is incident to $v$. Therefore, $S$ is a vertex cover of~$G$.
\end{proofclaimfinal}
\end{proof}

This result shows that the problem is intractable even for planar subcubic bipartite graphs with maximum threshold~$2$, which means that it is para-$\NP$-complete when parameterized by $m(f)+\Delta(G)$. Consequently, the problem is not in $\FPT$, not even in $\XP$, with respect to these parameters, unless $\P = \NP$.

The next theorem shows that the problem is parameterized intractable for the parameter $tw(G)$, which means that it is also not in $\FPT$, unless $\FPT = \W[1]$.

\begin{theorem}
\label{thm:hardtw}
$f$-\nameref{prob:criticalSet} is $\W[1]$-hard when parameterized by $tw(G)$.
\end{theorem}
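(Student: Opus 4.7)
The plan is to reduce from a problem known to be $\W[1]$-hard when parameterized by treewidth. A natural source is \textsc{List Coloring} parameterized by treewidth, shown $\W[1]$-hard by Fellows, Fomin, Lokshtanov, Rosamond, Saurabh, Szeider and Thomassen, because its constraints are strictly local and therefore lend themselves to encodings by per-vertex threshold conditions. The strategy is to replace each vertex of the input instance by a ``choose-one-color'' gadget and each edge by a ``forbid-common-color'' gadget, in such a way that the treewidth of the constructed graph remains bounded by a function of the input's treewidth.

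Given an instance $(H, L)$ with $tw(H) = t$, I would build $(G, f, k')$ along the following lines. For each $v \in V(H)$, construct a \emph{selection gadget} $G_v$ containing one \emph{color vertex} $v^c$ for every $c \in L(v)$, together with a few auxiliary vertices whose thresholds exploit the two-sided nature of the critical-set condition (each $w \in S$ must have fewer than $f(w)$ neighbors outside $S$, while each $w \notin S$ must have at least $f(w)$ neighbors in $S$) to force exactly one color vertex per $G_v$ to ``pick a side'' (say, to be excluded from $S$, while the others are included). For each edge $uv \in E(H)$ and each common color $c \in L(u) \cap L(v)$, attach a short \emph{conflict gadget} between $u^c$ and $v^c$ whose threshold conditions cannot be simultaneously satisfied when both $u^c$ and $v^c$ are chosen as the excluded color of their respective selection gadgets. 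Finally, $k'$ is set to the exact total size that any $f$-critical set of $G$ must have when it makes one consistent choice per selection gadget and satisfies every conflict gadget.

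Correctness then follows from two checks. Given a proper list coloring $\chi$ of $H$, one obtains an $f$-critical set of size $k'$ by excluding, at each $G_v$, the color vertex $v^{\chi(v)}$ and including all remaining color and auxiliary vertices; a local check of the threshold condition at each vertex shows this is indeed an $f$-critical set. Conversely, any $f$-critical set of size at most $k'$ must, by design of the selection gadgets, exclude exactly one color vertex per $v$, and the conflict gadgets guarantee that the induced color assignment is proper.

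The hard part will be the gadget design itself. Unlike for $f$-conversion sets, the critical-set condition is ``one-shot'' and two-sided, which makes it delicate to enforce an ``exactly one'' choice while keeping each gadget of constant size and bounded treewidth. Both the selection and the conflict gadgets must attach to the rest of the construction through a bounded-size interface, so that a width-$t$ tree decomposition of $H$ can be inflated to a width-$g(t)$ tree decomposition of $G$ by expanding each bag to include the gadgets of the vertices and edges it contains. Once such gadgets are exhibited, the treewidth bound $tw(G) \leq g(t)$ combined with the correctness argument yields the claimed $\W[1]$-hardness parameterized by $tw(G)$.
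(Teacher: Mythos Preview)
Your proposal is a plan, not a proof: you explicitly defer the core of the argument by writing that ``the hard part will be the gadget design itself'' and ``once such gadgets are exhibited.'' The entire technical content of the theorem lies precisely in that design, and none of it is present. In particular, the two-sided nature of the critical-set condition that you correctly identify makes an ``exactly one'' selection gadget genuinely nontrivial to build with local, constant-size pieces; you have not shown that such gadgets exist, and the paper's construction does not use them.

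There is also a structural concern with the sketch as stated. You propose one conflict gadget per common color $c \in L(u)\cap L(v)$, attached at the specific color vertices $u^c$ and $v^c$. The interface between $G_u$ and $G_v$ then has size $|L(u)\cap L(v)|$, which is not bounded by any function of $tw(H)$, so ``expanding each bag to include the gadgets of the vertices and edges it contains'' need not produce bags of size $g(t)$. You would need either a variant of \textsc{List Coloring} with bounded list sizes (not known to be $\W[1]$-hard parameterized by treewidth alone) or a single per-edge gadget whose interface with each selection gadget has bounded size while still forbidding all common colors simultaneously---which is again exactly the missing design work.

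For comparison, the paper sidesteps these issues by reducing from \textsc{Clique} parameterized by~$k$. It builds~$G'$ with a separator $Q$ of size $O(k^2)$ (a handful of global vertices $u^r$, $y^{r,s}$, $a^{r,s}$, $b^{r,s}$ per index pair, plus the set~$Z$) whose removal leaves only cycles and isolated vertices; hence $tw(G')=O(k^2)$. Vertex selection and edge verification are encoded \emph{numerically}, via thresholds and neighborhood counts chosen so that $M_i+N_i=q$ but $M_i+N_j<q$ or $M_j+N_i<q$ whenever $i\neq j$, rather than via per-color or per-edge local gadgets. That arithmetic trick is what keeps the separator small and is the key idea your sketch is missing.
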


\begin{proof}
We show a parameterized reduction from the \textsc{Clique} problem. Given a graph $G$ and integer $k$, the problem asks whether there is a clique in $G$ of size at least $k$. This is a known $\W[1]$-complete problem when parameterized by $k$. Let $(G=(V,E),k)$ be an instance of \textsc{Clique}, where $n = |V|$ and $m = |E|$, and $(G'=(V',E'),f,k')$ be the instance of $f$-\nameref{prob:criticalSet} that will be constructed.

Let $V = \{v_1,v_2,\ldots,v_n\}$ and $q = 2n$. For each $v_i \in V$, let $M_i = n+i-1$ and $N_i = n-i+1$. Note that $M_i+N_i = q$ for any $1 \leq i \leq n$, and for $1 \leq j \neq i \leq n$, either $M_i+N_j < q$ or $M_j+N_i < q$.

The construction of $G'$ and $f$ is as follows:
\begin{itemize}
    \item For each $1 \leq r \leq k$ and $1 \leq i \leq n$, add to $G'$ the set $U^r_i$ of $q$ vertices, all with threshold $k+1$, that induces a cycle. Let $U^r = \bigcup\limits_{i=1}^n U^r_i$ and $U = \bigcup\limits_{r=1}^k U^r$;
    
    \item For each $1 \leq r \leq k$, add to $G'$ the vertex $u^r$ with threshold $q$ and an edge between $u^r$ and each vertex in $U^r$;
    
    \item For each $1 \leq r \neq s \leq k$, add to $G'$ the set of vertices $C^{r,s} = \{a^{r,s},b^{r,s}\}$ both vertices with threshold~$q$. For any $1 \leq r \leq k$, let $C^r = \bigcup\limits_{s=1}^{r-1} C^{r,s} \cup \bigcup\limits_{s=r+1}^k C^{r,s}$ and $C = \bigcup\limits_{r=1}^k C^r$;
    
    \item For every $1 \leq r \neq s \leq k$ and $1 \leq i \leq n$, add to $G'$ an edge between $M_i$ vertices in $U^r_i$ and $a^{r,s}$, and an edge between the other $N_i$ vertices in $U^r_i$ (the ones that are not adjacent to $a^{r,s}$) and $b^{r,s}$. This must be done such that, for every $1 \leq r \leq k$, each vertex in $U^r$ has degree exactly $k+1$ and is either adjacent to $a^{r,s}$ for each $1 \leq s \neq r \leq k$, or adjacent to $b^{r,s}$ for each $1 \leq s \neq r \leq k$;
    
    \item For each $1 \leq r < s \leq k$ and $v_iv_j \in E$, add the set of vertices $Y^{r,s}_{i,j}$ and $Y^{r,s}_{j,i}$, each inducing a cycle with $2q$ vertices. Each of their vertices has threshold $3$. Let $Y^{r,s} = \bigcup\limits_{v_iv_j \in E} Y^{r,s}_{i,j} \cup Y^{r,s}_{j,i}$ and $Y = \bigcup\limits_{s=2}^k\bigcup\limits_{r=1}^{s-1} Y^{r,s}$;

    \item For each $1 \leq r < s \leq k$, add to $G'$ the vertex $y^{r,s}$ with threshold $2q$ and adjacent to every vertex in $Y^{r,s}$;
    
    \item For each $1 \leq r \neq s \leq k$ and each $v_iv_j \in E$, add an edge in $G'$ between each vertex in $Y^{r,s}_{i,j}$ ($Y^{r,s}_{j,i}$) and exactly one of the four vertices $a^{r,s},b^{r,s},a^{s,r},b^{s,r}$ in the following manner: add an edge between $N_i$ of its vertices and $a^{r,s}$ ($a^{s,r}$); add an edge between $M_i$ of its vertices and $b^{r,s}$ ($b^{s,r}$); add an edge between $N_j$ of its vertices and $a^{s,r}$ ($a^{r,s}$); and add an edge between $M_j$ of its vertices and $b^{s,r}$ ($b^{r,s}$);

    \item Let $k' = kq + \frac{k(k-1)}{2} \cdot 2q + (k+1) = k^2q+k+1$. Add to $G'$ the set of $k'$ vertices $W$, all with threshold $1$. Add to $G'$ an edge between each vertex in $W$ and each vertex in $C$. Additionally, for each $1 \leq r < s \leq k$, add to $G'$ an edge between each vertex in $W$ and $u^r$ and between each vertex in $W$ and $y^{r,s}$.
    
    \item Add to $G'$ the set of $k+1$ vertices $Z$. Add to $G'$ an edge between each vertex in $Z$ and each vertex in $U \cup Y \cup W$. Let the threshold of each vertex in $Z$ be $|U| + |Y| + |W| + 1$, that is, its degree in $G'$ plus one.
\end{itemize}

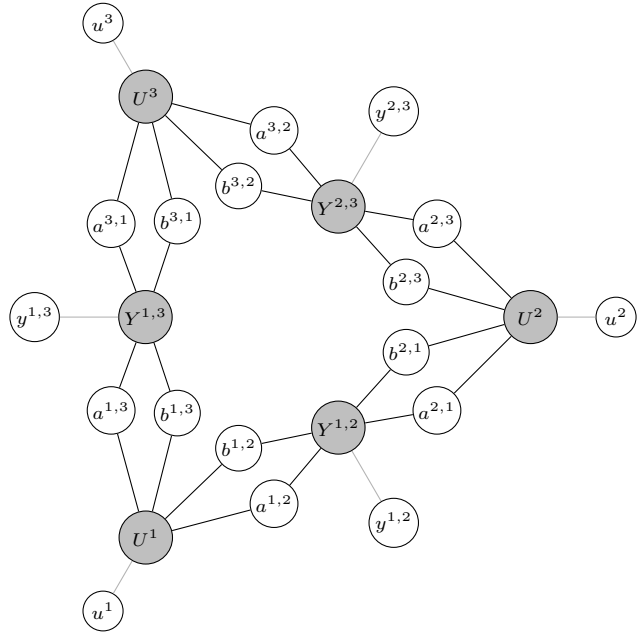
\begin{figure}[!htb]
    \centering
    \begin{tikzpicture}[scale=0.9,font=\scriptsize]
        \tikzset{vertex/.style={draw, circle, minimum size=15pt,inner sep=1pt}};
        \tikzset{setU/.style={draw, circle, minimum size=20pt,inner sep=1pt,fill=black!25}};
        \tikzset{setY/.style={draw, circle, minimum size=20pt, inner sep=1pt,fill=black!25}};
        \newcounter{count}
        \pgfmathsetmacro{\diam}{4}
        \foreach \r/\t in {1/-1,2/0,3/1}{
            \begin{scope}[rotate=120*\t]
                \node[vertex] (u\r) at (0:\diam+1) {$u^\r$};
                \node[setU] (U\r) at (0:\diam-0.25) {$U^\r$};
                \draw[-,dashed] (u\r)  to (U\r);
                \ifnum\r=2\setcounter{count}{-3}\else\setcounter{count}{3}\fi;
                \foreach \s in {1,2,3}{
                    \tikzmath{
                        if \s != \r then {
                            {\ifnum\r=2\addtocounter{count}{2}\else\addtocounter{count}{-2}\fi;
                            \node[vertex] (a\r\s) at (\thecount*30:\diam-1.25) {$a^{\r,\s}$};
                            \node[vertex] (b\r\s) at (\thecount*15:\diam-2) {$b^{\r,\s}$};
                            \draw[-] (a\r\s)  to (U\r);
                            \draw[-] (b\r\s)  to (U\r);};
                        };
                    }
                }
            \end{scope}
        }
        \foreach \r/\t/\s in {1/-1/2,2/0/3,1/1/3}{
            \begin{scope}[rotate=120*\t]
                \node[vertex] (y\r\s) at (60:\diam-0.5) {$y^{\r,\s}$};
                \node[setY] (Y\r\s) at (60:\diam-2.12) {$Y^{\r,\s}$};
                \draw[-,dashed] (Y\r\s)  to (y\r\s);
                \draw[-] (Y\r\s)  to (a\r\s);
                \draw[-] (Y\r\s)  to (a\s\r);
                \draw[-] (Y\r\s)  to (b\r\s);
                \draw[-] (Y\r\s)  to (b\s\r);
            \end{scope}
        }
    \end{tikzpicture}
    \caption{\label{fig:reductiontreewidth} The overall structure of $G'$ resulting from the construction applied to the path $P = v_1,v_2,v_3$ and $k=3$, with $Z \cup W$ omitted. Unlike dashed edges, a solid edge between an individual vertex $x$ (white colored nodes) and a set of vertices $Q$ (gray colored nodes) does not necessarily represent an edge between $x$ and every vertex in $Q$.}
\end{figure}

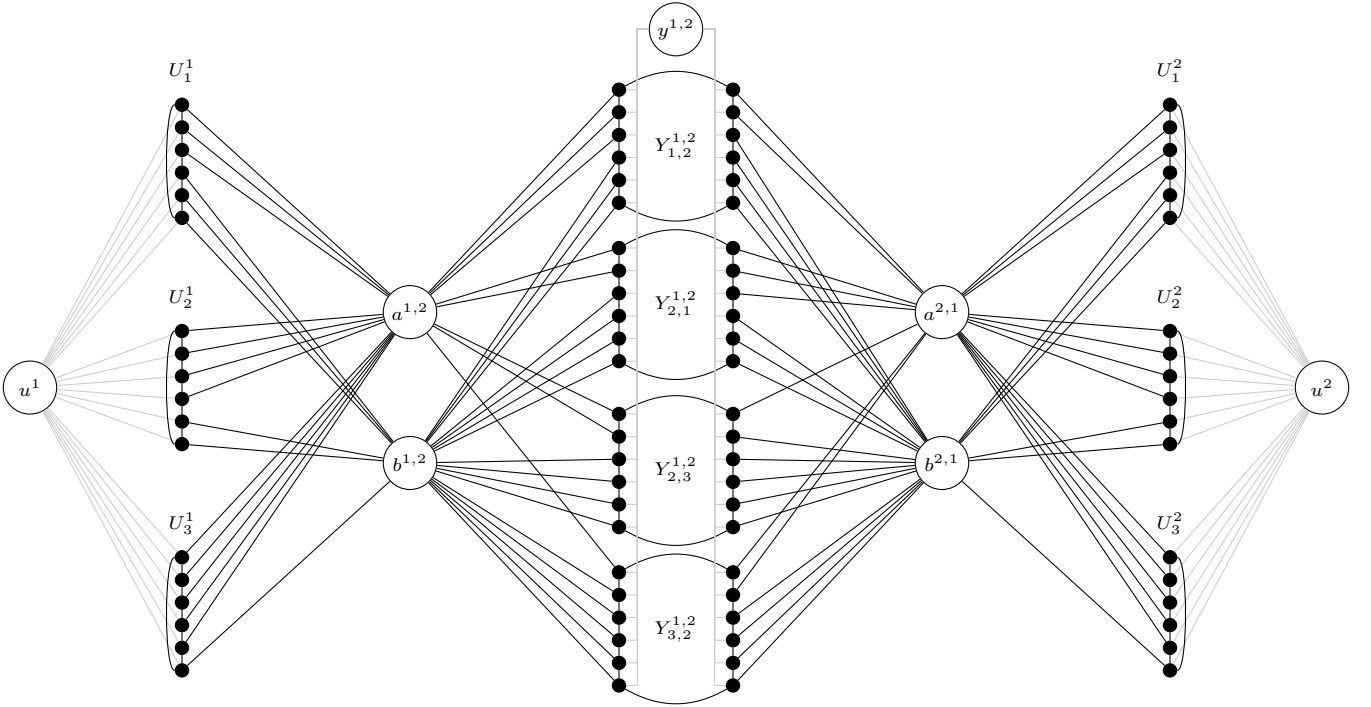
\begin{figure}[htb!]
    \centering
    \begin{tikzpicture}[scale=1,auto,font=\scriptsize]
        \tikzset{vertex/.style={draw, circle, minimum size=20pt,inner sep=1pt}};
        \tikzset{cvertex/.style={draw, circle, minimum size=5pt,inner sep=1pt,fill=black!25}};
        \node[vertex] (y12) at (5.5,4.75) {$y^{1,2}$};
        \foreach \i/\x/\v/\w in {1/-3.2/3/2,2/-1.1/2/3,3/1.1/2/1,4/3.2/1/2}{
            \foreach \j/\p/\jd in {1/0.75/12,2/0.45/11,3/0.15/10,4/-0.15/9,5/-0.45/8,6/-0.75/7}{
                
                \node[cvertex,fill=black] (y12\i\j) at (4.75,\x+\p) {};
                \node[cvertex,fill=black] (y12\i\jd) at (6.25,\x+\p) {};
            }
            \node () at (5.5,\x) {$Y^{1,2}_{\v,\w}$};
        }
        \foreach \i in {1,2,3,4}{
            \foreach \j/\jp in {1/2,2/3,3/4,4/5,5/6,7/8,8/9,9/10,10/11,11/12}{
                \draw[-] (y12\i\j) to (y12\i\jp);
            }
            \draw[-] (y12\i6) to[bend right] (y12\i7);
            \draw[-] (y12\i12) to[bend right] (y12\i1);
        }
        \foreach \l/\n/\m in {1/2/-1,2/1/1}{
            \node[vertex] (u\l) at (5.5+8.5*\m,0) {$u^\l$};
            \node[vertex] (a\l\n) at (5.5+3.5*\m,1) {$a^{\l,\n}$};
            \node[vertex] (b\l\n) at (5.5+3.5*\m,-1) {$b^{\l,\n}$};
            \foreach \i/\x in {1/3,2/0,3/-3}{
                \node () at (5.5+6.5*\m,\x+1.2) {$U^\l_\i$};
                \foreach \j/\p in {1/0.75,2/0.45,3/0.15,4/-0.15,5/-0.45,6/-0.75}{
                    \node[cvertex,fill=black] (q\l\i\j) at (5.5+6.5*\m,\x+\p) { };
                    \draw[-,color=black!20] (q\l\i\j) to (u\l);
                }
            }
            \foreach \i in {1,2,3}{
                \foreach \j/\jp in {1/2,2/3,3/4,4/5,5/6}{
                    \draw[-] (q\l\i\j) to (q\l\i\jp);
                }
                \draw[-] (q\l\i6) to[in=90-90*\m,out=90-90*\m,looseness=0.25] (q\l\i1);
            }
            \draw[-] (a\l\n) to (q\l11);
            \draw[-] (a\l\n) to (q\l12);
            \draw[-] (a\l\n) to (q\l13);
            \draw[-] (b\l\n) to (q\l14);
            \draw[-] (b\l\n) to (q\l15);
            \draw[-] (b\l\n) to (q\l16);

            \draw[-] (a\l\n) to (q\l21);
            \draw[-] (a\l\n) to (q\l22);
            \draw[-] (a\l\n) to (q\l23);
            \draw[-] (a\l\n) to (q\l24);
            \draw[-] (b\l\n) to (q\l25);
            \draw[-] (b\l\n) to (q\l26);

            \draw[-] (a\l\n) to (q\l31);
            \draw[-] (a\l\n) to (q\l32);
            \draw[-] (a\l\n) to (q\l33);
            \draw[-] (a\l\n) to (q\l34);
            \draw[-] (a\l\n) to (q\l35);
            \draw[-] (b\l\n) to (q\l36);
        }
        \foreach \i/\j/\k in {4/1/2,3/2/1,2/2/3,1/3/2}{
                \newcount\limpi
                \newcount\limpj
                \tikzmath{
                    \limpi = 3-\j+2;
                    \limpj = 6+3+\k;
                }
                \pgfmathsetmacro{\limi}{3-\j+1}
                \pgfmathsetmacro{\limj}{6+3+\k-1}
                \foreach \x in {1,...,\limi}{
                    \draw[-] (a12) to (y12\i\x);
                    \draw[-,color=black!20] (y12) -- ($(y12.west)-(0.15,0)$) |- (y12\i\x);
                }
                \foreach \x in {\the\limpi,...,6}{
                    \draw[-] (b12) to (y12\i\x);
                    \draw[-,color=black!20] (y12) -- ($(y12.west)-(0.15,0)$) |- (y12\i\x);
                }
                \foreach \x in {7,...,\limj}{
                    \draw[-] (b21) to (y12\i\x);
                    \draw[-,color=black!20] (y12) -- ($(y12.east)+(0.15,0)$) |- (y12\i\x);
                }
                \foreach \x in {\the\limpj,...,12}{
                    \draw[-] (a21) to (y12\i\x);
                    \draw[-,color=black!20] (y12) -- ($(y12.east)+(0.15,0)$) |- (y12\i\x);
                }
            }
    \end{tikzpicture}
    \caption{\label{fig:reductiontreewidthparts} Part of \Cref{fig:reductiontreewidth} in more detail. It depicts the vertices in $U^1$, $U^2$, $C^{1,2}$, $C^{2,1}$ and $Y^{1,2}$, the vertices $u^1$, $u^2$ and $y^{1,2}$ and the edges between them. For clarity, some edges are in light gray.}
\end{figure}

This construction requires quadratic time. In \Cref{fig:reductiontreewidth}, we provide an example of the construction applied to the path $P = v_1,v_2,v_3$. The gray nodes represent a set of vertices that induces cycles, whereas the white nodes represent regular vertices. A dashed edge between a vertex $x$ and a set of vertices $Q$ represents an edge between $x$ and every vertex in $Q$, whereas a solid edge between them does not necessarily represent an edge between $x$ and every vertex in $Q$. For clarity, in the figure, we omit the vertices in sets $Z$ and $W$.

In \Cref{fig:reductiontreewidthparts}, we have part of \Cref{fig:reductiontreewidth} in more detail. It depicts the vertices in $U^1$, $U^2$, $C^{1,2}$, $C^{2,1}$ and $Y^{1,2}$, the vertices $u^1,u^2$ and $y^{1,2}$ and the edges between these vertices. Note that $u^1$ is adjacent to every vertex in $U^1$, $u^2$ is adjacent to every vertex in $U^2$, and $y^{1,2}$ is adjacent to every vertex in $Y^{1,2}$.

The set $Q = C \cup Z \cup \{u^r \mid 1 \leq r \leq k\} \cup \{y^{r,s} \mid 1 \leq r < s \leq k\}$ (represented by the white nodes in \Cref{fig:reductiontreewidth} plus the vertices in $Z$) separates $G'$. All connected components of $G' - Q$ are either cycles (represented by the gray nodes in \Cref{fig:reductiontreewidth}) or isolated vertices in $W$. Since a cycle has a treewidth equal to $2$, it follows that $tw(G') \leq |Q| + 2 = (|C| + |Z| + |\{u^r \mid 1 \leq r \leq k\}| + |\{y^{r,s} \mid 1 \leq r < s \leq k\}|) + 2 = 2k(k-1) + (k+1) + k + \frac{k(k-1)}{2} + 2 = \frac{5k^2-k}{2} + 3$. Thus, $tw(G')$ is bounded by a quadratic function of $k$.

The idea of the reduction is as follows. If $G$ has a clique of size at least $k$, let $S$ be the union of $Z$, with $U^i_v$ for each $v$ in the clique (choosing distinct values for $i$ for each one), with $Y^{i,j}_{v,w}$ for each edge $vw$ in the clique such that $U^i_v$ and $U^j_w$ are in $S$. We claim that $S$ is an $f$-critical set of $G'$ of size at most $k'$.

Conversely, any $f$-critical set $S'$ of $G'$ of size at most $k'$ must consist of exactly the following vertices: all the vertices in $Z$; for each $1 \leq r \leq k$, all the vertices in the set $U^r_i$ for some $i$; and, for each pair of integers $1 \leq r < s \leq k$, all the vertices in the set $Y^{r,s}_{i,j}$ for some $i$ and $j$. Furthermore, because of how the values $M_i$ and $N_i$ are chosen, in order to ensure that every vertex $v \in C$ has at least $f(v)$ neighbors in $S'$, we must have that, for every $1 \leq r < s \leq k$, if $U^r_i \subseteq S'$ and $U^s_j \subseteq S'$, then $Y^{r,s}_{i,j} \subseteq S'$. This is the only way in which we can guarantee that each vertex $v \in C$ has at exactly $f(v)$ neighbors in $S'$. Because the set $Y^{r,s}_{i,j}$ only exists in $G'$ if $v_iv_j$ is an edge in $G$, this means that the vertices $v$ that represent each set $U^r_v$ containing in $S'$ form a clique of size $k$ in $G$.

The two following results are important in order to prove the equivalence of the two instances. Remember that $k' = k^2q+k+1$. Let $X = C \cup \{u^r \mid 1 \leq r \leq k\} \cup \{y^{r,s} \mid 1 \leq r < s \leq k\}$.

\begin{claim}
\label[claim]{claim:XWSl}
Let $S'$ be an $f$-critical set of $G'$ of size at most $k'$. Then, $X \cup W$ is disjoint from $S'$.
\end{claim}

\begin{proofclaim}
For every $v \in X$, $f(v) \leq d_{G'}(v) - k'$ (remember that every vertex in $X$ is adjacent to every vertex in $W$ and $|W| = k'$), since
\begin{itemize}
    \item for each $u^r$, $f(u^r) = k+1$ and $d(u^r) = k' + nq$;
    \item for each $y^{r,s}$, $f(y^{r,s}) = 2q$ and $d(y^{r,s}) = k' + 4mq$; and
    \item for each $v \in C$, $f(v) = q$ and $d(v) \geq k' + 2q$ (we can assume $m \geq 1$).
\end{itemize}

Assume that there is a vertex $v \in X$ in $S'$. This means that, even if every other vertex in $S'$ is in the neighborhood of $v$, $v$ would still have at least $f(v)$ neighbors in $V' \setminus S'$, which implies that the state of $v$ would change to $0$ at time $1$ in $c$. Therefore, $X$ and $S'$ are disjoint. This implies that $S'$ and $W$ are also disjoint as each vertex in $W$ has threshold $1$ and is adjacent to every vertex in $X$. We can then conclude that the set $X \cup W$ is disjoint from $S'$.
\end{proofclaim}

\begin{claim}
\label[claim]{claim:UYchoice}
Let $S'$ be an $f$-critical set of $G'$ of size at most $k'$. For every $1 \leq r \neq s \leq k$ and $1 \leq i \neq j \leq n$, either $U^r_i \subseteq S'$, or $U^r_i$ is disjoint from $S'$; and either $Y^{r,s}_{i,j} \subseteq S'$, or $Y^{r,s}_{i,j}$ is disjoint from $S'$.
\end{claim}

\begin{proofclaim}
Every vertex $v \in U^r_i$ has $f(v)-1 = k$ neighbors in $C \subseteq X$. Furthermore, by \Cref{claim:XWSl}, $X$ is disjoint from $S'$, which implies that $C$ also is. Thus, if there are two neighbors $v \in S',w \notin S'$ in some set $U^r_i$, then $c_1(v)=0$. Hence, since $G'[U^r_i]$ is connected, for every $1 \leq r \leq k$ and $1 \leq i \leq n$, either $U^r_i \subseteq S'$, or $U^r_i$ is disjoint from $S'$.

The same reasoning applies to the sets $Y^{r,s}_{i,j}$. Every vertex $v \in Y^{r,s}_{i,j}$ has at least $f(v)-1 = 2$ neighbors in $C \cup \{y^{r,s}\} \subseteq X$. Thus, if there are two neighbors $v \in S',w \notin S'$ in some set $Y^{r,s}_{i,j}$, then $c_1(v)=0$. Hence, since $G'[Y^{r,s}_{i,j}]$ is connected, for each set $Y^{r,s}_{i,j}$, either $Y^{r,s}_{i,j} \subseteq S'$, or $Y^{r,s}_{i,j}$ is disjoint from~$S'$.
\end{proofclaim}

We conclude by proving the equivalence of the instances.
\begin{claim}
$G$ has a clique of size at least $k$ if and only if $G'$ has an $f$-critical set of size at most $k'$.
\end{claim}

\begin{proofclaimfinal}
Assume that $G=(V,E)$ has a clique $S = \{v_{d_1},v_{d_2},\ldots,v_{d_k}\}$ of size at least $k$. Without loss of generality, let $d_i = i$ for every $1 \leq i \leq k$. Let $$S' = Z \cup \bigcup\limits_{r=1}^k U^r_r \cup \bigcup\limits_{s=2}^k \bigcup\limits_{r=1}^{s-1} Y^{r,s}_{r,s}.$$ Let $c$ be an $f$-reversible process on $G'$ started by $S'$. For each vertex $v \in Z$, $f(v) > d_{G'}(v)$, so $c_1(v)=1$. In addition, every vertex $v \in U^r_r$ has exactly $f(v)-1=k$ neighbors outside of $S'$, which are $u^r$ and the other $k-1$ vertices, either each $a^{r,s}$ or each $b^{r,s}$. Additionally, every vertex $v \in Y^{r,s}_{r,s}$ also has exactly $f(v)-1=2$ neighbors outside of $S'$, which are $y^{r,s}$ and exactly one of the vertices in the set $\{a^{r,s},b^{r,s},a^{s,r},b^{s,r}\}$. Thus, for every $v \in S'$, $c_1(v)=1$.

Every $u^r$ has exactly $f(u^r)=q$ neighbors in $S'$, which are the vertices in $U^r_r$. Every $y^{r,s}$ also has exactly $f(y^{r,s})=2q$ neighbors in $S'$, which are the vertices in $Y^{r,s}_{r,s}$. Every vertex $v$ in $(U \cup Y) \setminus S'$ and in $W$ has exactly $k+1 \geq f(v)$ vertices in $S'$, which are the vertices in $Z$.

Additionally, each $a^{r,s}$ has exactly $f(a^{r,s})=q$ neighbors in $S'$, which are $M_r$ neighbors in $U^r_r$ plus $N_r$ neighbors in either $Y^{r,s}_{r,s}$ if $r < s$, or $Y^{s,r}_{s,r}$ otherwise (remember that $M_i+N_i = q$ for any $1 \leq i \leq n$). Similarly, each $b^{r,s}$ has $f(b^{r,s})=q$ neighbors in $S'$, which are $N_r$ neighbors in $U^r_r$ plus $M_r$ neighbors in either $Y^{r,s}_{r,s}$ if $r < s$, or $Y^{s,r}_{s,r}$ otherwise. Thus, for every vertex $v \in V' \setminus S'$, $c_1(v)=1$. Therefore, $S'$ is an $f$-critical set of $G'$ of size at most~$k'$.

Conversely, let $S'$ be an $f$-critical set of $G'$ of size at most $k'$, and let $c$ be the $f$-reversible process on $G'$ started by $S'$. Since $f(v) > d_{G'}(v)$ for every $v \in Z$, it follows that $Z \subseteq S'$.

By \Cref{claim:UYchoice}, for every $1 \leq r \neq s \leq k$ and $1 \leq i \neq j \leq n$, either $U^r_i \subseteq S'$, or $U^r_i$ is disjoint from $S'$; and either $Y^{r,s}_{i,j} \subseteq S'$, or $Y^{r,s}_{i,j}$ is disjoint from $S'$. Furthermore, by \Cref{claim:XWSl}, $X \cup W$ is disjoint from $S'$. Notice that, since for each $1 \leq r \neq s \leq k$, $u^r,y^{r,s} \in X$, we have that $u^r,y^{r,s} \notin S'$.

Therefore, for each $1 \leq r \leq k$, there is at least one value $1 \leq i \leq n$ such that $U^r_i \subseteq S'$, since $u^r$ changes its state to $1$, and $N(u^r) \setminus W = U^r$. Also, for each $1 \leq r < s \leq k$, there is at least one pair $1 \leq i,j \leq n$ such that $Y^{r,s}_{i,j} \subseteq S'$, since $y^{r,s}$ changes its state to $1$, and $N(y^{r,s}) \setminus W = Y^{r,s}$.

Since $k' = k^2q+k+1 = |Z| + kq + \frac{k(k-1)}{2} \cdot 2q$, we can conclude that for each $1 \leq r \leq k$, there is \emph{exactly} one value $1 \leq i \leq n$ such that $U^r_i \subseteq S'$, and for each pair $1 \leq r < s \leq k$, there is \emph{exactly} one pair $1 \leq i,j \leq n$ such that $Y^{r,s}_{i,j} \subseteq S'$. 

For each $1 \leq r \leq k$, let $U^r_{d_r} \subseteq S'$ and, for each $1 \leq r \neq s \leq k$, let $Y^{r,s}_{p_{r,s},q_{r,s}} \subseteq S'$. This means that $$S' = Z \cup \bigcup_{r=1}^k U^r_{d_r} \cup \bigcup_{s=2}^k\;\bigcup_{r=1}^{s-1} Y^{r,s}_{p_{r,s},q_{r,s}}.$$ Since $S'$ is an $f$-critical set of $G'$, each vertex in $C$ has at least $q$ neighbors in $S'$. By our construction, the only way this is possible is if, for each set $Y^{r,s}_{p_{r,s},q_{r,s}}$, $d_r = p_{r,s}$ and $d_s = q_{r,s}$, since only in this case every vertex in $C$ has exactly $q$ neighbors in $S'$ (remember that $M_i+N_i = q$ and, for each pair $1 \leq i \neq j \leq n$, either $M_i+N_j < q$ or $M_j+N_i < q$).

Let $S = \{v_{d_r} \mid 1 \leq r \leq k\}$. Since, for each pair $1 \leq r < s \leq k$, there is a set $Y^{r,s}_{d_r,d_s}$ in $G'$, then there is an edge in $G$ between the vertices $v_{d_r}$ and $v_{d_s}$. Therefore, $S$ is a clique of $G$ of size~$k$.
\end{proofclaimfinal}
\end{proof}

\section{Algorithmic results for \texorpdfstring{$tw(G), m(f)$}{tw(G), m(f)} and \texorpdfstring{$\Delta(G)$}{Δ(G)}}
\label{sec:tractable}

Based on the results in the previous section, we can conclude that the problem is intractable even when parameterized by $m(f), \Delta(G)$ or $tw(G)$. This is surprising because many problems on graphs are in $\FPT$ when parameterized by $tw(G)$. However, in this section, we show that, if we parameterize the problem by the combinations $tw(G)+m(f)$ or $tw(G)+\Delta(G)$, the problem becomes $\FPT$.

\begin{theorem}
\label{thm:fpttwmf}
$f$-\nameref{prob:criticalSet} is in $\FPT$ when parameterized by $tw(G)+m(f)$ or by $tw(G)+\Delta(G)$.
\end{theorem}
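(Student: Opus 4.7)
The plan is to prove both statements via a single dynamic programming algorithm over a nice tree decomposition. First, note that the parameterization by $tw(G)+\Delta(G)$ reduces to the one by $tw(G)+m(f)$: since we restrict to thresholds with $1 \leq f(v) \leq d_G(v)+1$, we have $m(f) \leq \Delta(G)+1$, so any FPT algorithm parameterized by $tw(G)+m(f)$ is automatically FPT parameterized by $tw(G)+\Delta(G)$. It therefore suffices to handle $tw(G)+m(f)$. The starting point is the local characterization: $S \subseteq V(G)$ is an $f$-critical set if and only if $|N(v) \setminus S| < f(v)$ for every $v \in S$ and $|N(v) \cap S| \geq f(v)$ for every $v \notin S$; the convergence condition for times $t \geq 2$ is automatic since every vertex has state $1$ at time $1$.

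I would compute a nice tree decomposition $(T,\{X_t\}_{t \in V(T)})$ of width $w = tw(G)$ and, for each node $t$, let $V_t$ be the union of the bags in the subtree rooted at $t$. The DP table $D_t[A,\phi]$ is indexed by a subset $A \subseteq X_t$ (the bag vertices intended to lie in $S$) and a function $\phi : X_t \to \{0,1,\ldots,m(f)\}$ that records, for each $v \in X_t$, the capped count of its already-processed relevant neighbors: neighbors in $V_t \setminus S$ if $v \in A$, and neighbors in $V_t \cap S$ if $v \in X_t \setminus A$. The value $D_t[A,\phi]$ is the minimum size of an $S \subseteq V_t$ consistent with $A$ and $\phi$ such that every forgotten vertex $u \in V_t \setminus X_t$ already satisfies its constraint; this is safe because once $u$ is forgotten no further neighbor of $u$ can appear in subsequent bags. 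Introduce, forget, and join nodes are handled in the standard way: at forget nodes one discards states violating the constraint for the forgotten vertex, and at join nodes one combines counts from the two children via $\phi_t(v) = \min\{\phi_{t_1}(v) + \phi_{t_2}(v) - \phi^\star(v),\, m(f)\}$, where $\phi^\star(v)$ is the contribution of the neighbors of $v$ lying inside $X_t$ (computable directly from $A$), so as to avoid double-counting edges within the shared bag. The answer is then $\min_{\phi} D_r[\emptyset, \phi] \leq k$ at the root $r$.

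There are at most $(2(m(f)+1))^{w+1}$ states per bag and each is processed in time polynomial in $n$, giving a total running time of $O^\star((2(m(f)+1))^{w+1})$, which is FPT in $tw(G)+m(f)$ and therefore also in $tw(G)+\Delta(G)$. The main technical hurdle I foresee is justifying the cap at the join-node update: one must verify that any two child states producing the same capped value $\phi_t(v)$ lead to the same set of realizable completions. This reduces to the observation that, for $v \in A$, the constraint $|N(v) \setminus S| < f(v)$ is violated as soon as the count reaches $f(v) \leq m(f)$, and for $v \notin A$, the constraint $|N(v) \cap S| \geq f(v)$ is satisfied as soon as the count reaches $f(v) \leq m(f)$; hence distinguishing counts beyond $m(f)$ is irrelevant, and the DP remains correct.
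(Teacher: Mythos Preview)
Your approach is correct and takes a genuinely different route from the paper. The paper proceeds in two steps: it first reduces the general $f$-\textsc{Critical Set} instance to a constant-threshold $m(f)$-\textsc{Critical Set} instance by attaching tree-shaped gadgets to every vertex (raising the treewidth by at most one), and then it expresses ``$S$ is a $c$-critical set'' by an MSO formula whose length depends only on $c=m(f)$, invoking the optimization version of Courcelle's theorem. Your direct dynamic program over a nice tree decomposition is more hands-on and yields an explicit single-exponential bound of the form $O^\star\bigl((m(f)+1)^{O(tw(G))}\bigr)$, whereas the Courcelle route gives a non-elementary dependence on the parameters; conversely, the paper's argument is shorter to write and needs no node-by-node case analysis.

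One technicality to watch in your join rule: the formula $\phi_t(v)=\min\{\phi_{t_1}(v)+\phi_{t_2}(v)-\phi^\star(v),\,m(f)\}$ can be wrong when one of the child values is already capped at $m(f)$ while $\phi^\star(v)$ (the uncapped count of relevant neighbours inside $X_t$) is large. For instance, if $tw(G)>m(f)$ and $v\in A$ has $\phi^\star(v)>m(f)$ neighbours in $X_t\setminus A$, then $\phi_{t_1}(v)=\phi_{t_2}(v)=m(f)$ but your formula returns a value below $m(f)$, even though the true count exceeds $m(f)$ and the state is infeasible for $v$. The fix is easy: either set $\phi_t(v)=m(f)$ whenever $\phi_{t_1}(v)=m(f)$ or $\phi_{t_2}(v)=m(f)$ (sound because the true count in $V_t$ dominates the true count in each $V_{t_i}$), or adopt the cleaner convention that $\phi$ counts only \emph{forgotten} neighbours, so that the join becomes a plain capped sum with no subtraction. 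Either way the algorithm remains FPT in $tw(G)+m(f)$, and hence in $tw(G)+\Delta(G)$ by your opening observation $m(f)\le\Delta(G)+1$.
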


\begin{proof}
If $m(f)=1$, $V(G)$ is the only $f$-critical set of $G$. Therefore, let us assume that $m(f) \geq 2$. The idea of this proof is to show that we can reduce an instance $(G,f,k)$ of the $f$-\nameref{prob:criticalSet} problem to an equivalent instance $(G',k')$ of the $m(f)$-\nameref{prob:criticalSet} problem such that $tw(G') \leq tw(G)+1$ in polynomial time, and then we show that we can solve $m(f)$-\nameref{prob:criticalSet} in time $O(g(m(f),tw(G')) \cdot n)$, for some function $g$ that depends only on $m(f)$ and $tw(G')$.

First, we show how to reduce an instance $(G,f,k)$ of $f$-\nameref{prob:criticalSet} to an instance $(G',k')$ of $m(f)$-\nameref{prob:criticalSet} in polynomial time, such that $tw(G') \leq tw(G)+1$. The idea is to attach gadgets to the vertices in $G$ so that each vertex behaves as if it had threshold $f(v)$, while all thresholds become equal to $m(f)$.

Let $G' = (V',E')$ be initially equal to $G=(V,E)$. For every vertex $v \in V$ such that $f(v) < m(f)$, add $m(f)-f(v)$ new neighbors to $v$ in $G'$ all of which have degree $1$ in $G'$ and let $P_v$ be the set of vertices added in this step. Then, add $m(f)-f(v)$ copies of the gadget in \Cref{fig:gadgetftoc}, and add an edge between $v$ and the vertex $w$ in the figure for each added gadget. In \Cref{fig:gadgetftoc}, $w$ has $m(f)$ neighbors with degree $1$ and $m(f) + k - 1$ neighbors with degree $m(f)+1$. Let $W_v$ be the set of all vertices $w$ adjacent to $v$ that are added in this step. Note that, because $G'[E' \setminus E]$ is a forest, $tw(G') \leq tw(G)+1$.

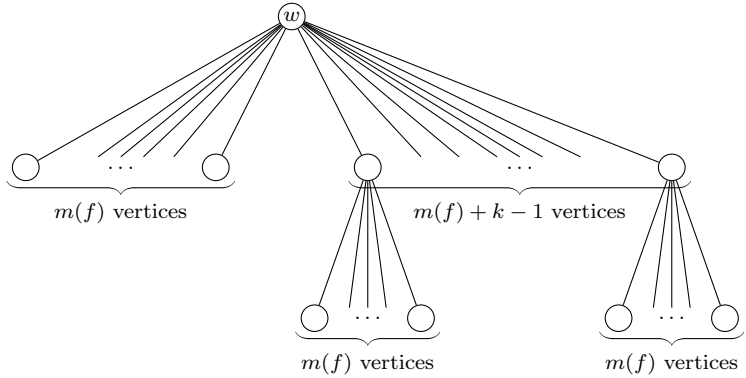
\begin{figure}[htb]
\centering
\begin{tikzpicture}[auto,font=\footnotesize]
\tikzset{vertex/.style={draw, circle, minimum size=10pt,inner sep=1pt}};
\node[vertex] (w) at (0,1) {$w$};
\node[vertex] (s1) at (-3.5,-1) {};
\node[] (r1) at (-2.25,-1) {$\ldots$};
\node[vertex] (s2) at (-1,-1) {};
\node[vertex] (n1) at (1,-1) {};
\node[] (r2) at (3,-1) {$\ldots$};
\node[vertex] (n2) at (5,-1) {};
\node[vertex] (q1) at (0.3,-3) {};
\node[] (r3) at (1,-3) {$\ldots$};
\node[vertex] (q2) at (1.7,-3) {};
\node[vertex] (l1) at (4.3,-3) {};
\node[] (r4) at (5,-3) {$\ldots$};
\node[vertex] (l2) at (5.7,-3) {};

\draw [decorate, decoration = {calligraphic brace,mirror,amplitude=5pt}] ($(s1.south)-(0.25,0)$) -- ($(s2.south)+(0.25,0)$);
\node[] () at (-2.25,-1.6) {$m(f)$ vertices};

\draw [decorate, decoration = {calligraphic brace,mirror,amplitude=5pt}] ($(n1.south)-(0.25,0)$) -- ($(n2.south)+(0.25,0)$);
\node[] () at (3,-1.6) {$m(f)+k-1$ vertices};

\draw [decorate, decoration = {calligraphic brace,mirror,amplitude=5pt}] ($(q1.south)-(0.25,0)$) -- ($(q2.south)+(0.25,0)$);
\node[] () at (1,-3.6) {$m(f)$ vertices};

\draw [decorate, decoration = {calligraphic brace,mirror,amplitude=5pt}] ($(l1.south)-(0.25,0)$) -- ($(l2.south)+(0.25,0)$);
\node[] () at (5,-3.6) {$m(f)$ vertices};

\draw[-] (w) edge (s1);
\draw[-] (w) edge (r1.north west);
\draw[-] (w) edge (r1.north);
\draw[-] (w) edge (r1.north east);
\draw[-] (w) edge ($(r1.north east) + (0.4,0)$);
\draw[-] (w) edge (s2);
\draw[-] (w) edge (n1);
\draw[-] (w) edge ($(r2.north west) - (1,0)$);
\draw[-] (w) edge ($(r2.north west) - (0.5,0)$);
\draw[-] (w) edge (r2.north west);
\draw[-] (w) edge (r2.north);
\draw[-] (w) edge (r2.north east);
\draw[-] (w) edge ($(r2.north east) + (0.5,0)$);
\draw[-] (w) edge (n2);

\draw[-] (n1) edge (q1);
\draw[-] (n1) edge (r3.150);
\draw[-] (n1) edge (r3.north);
\draw[-] (n1) edge (r3.30);
\draw[-] (n1) edge (q2);

\draw[-] (n2) edge (l1);
\draw[-] (n2) edge (r4.150);
\draw[-] (n2) edge (r4.north);
\draw[-] (n2) edge (r4.30);
\draw[-] (n2) edge (l2);
\end{tikzpicture}
\caption{\label{fig:gadgetftoc} Gadget added to $G'$ $m(f) - f(v)$ times for each $v \in V$ such that $f(v) < m(f)$.}
\end{figure}

Let $Q$ be the set of vertices with degree $1$ in $G'$. Note that no vertex in $Q$ is originally in $G$. Let $S$ be an $m(f)$-critical set of $G'$ of size $k' = k + |Q|$. It follows that $Q \subseteq S$ for any $v \in V$, since each vertex in $Q$ has degree $1$ and $m(f) \geq 2$. 

Furthermore, for any $v \in V$ and $w \in W_v$, $w$ has at most $k$ neighbors in $S \setminus Q$, since $|S| = k + |Q|$. This implies that $w$ has at least $m(f)$ neighbors in $V' \setminus S$ since $w$ has $m(f)+k$ neighbors that are not in $Q$ and at most $k$ of these neighbors are in $S$. Therefore, $w \notin S$ because, otherwise, it would change its state to $0$ at time $1$ in an $f$-reversible process on $G'$ started by $S$. 

We conclude that each vertex $v \in V$ has exactly $m(f)-f(v)$ neighbors in $V' \setminus V$ and in $S$, which are the vertices in $P_v$. Additionally, each vertex $v \in V$ has exactly $m(f)-f(v)$ neighbors that are in $V' \setminus V$ and are not in $S$, which are the vertices in $W_v$. Thus, each vertex $v \in V$ behaves as if it has threshold $f(v)$ considering only the influence of the vertices in~$V$.

Hence, letting $S \subseteq V(G)$, we have that $S$ is an $f$-critical set of $G$ of size at most $k$ if and only if $S \cup Q$ is an $m(f)$-critical set of $G'$ of size at most $k' = k + |Q|$.


Now, let us proceed to the second part of the proof. We are going to prove that $c$-\nameref{prob:criticalSet} can be solved in $O(g(c,tw(G')) \cdot n)$ time for some function $g$, where $c = m(f)$. To do this, we shall define a predicate $\mathbf{critical}(S)$ in the MSO logic with length depending only on $c$ such that $\mathbf{critical}(S)$ is satisfied by $S$ if and only if $S$ is a $c$-critical set of $G'$. Thus, we can invoke the optimization version of Courcelle's theorem~\cite{arnborg1991} using the affine function $\alpha(|S|) = |S|$ to obtain our result. 

For constants $c_1 \leq c_2$, we use the notation $\bigwedge_{i=c_1}^{c_2} P_i$ to denote the formula $P_{c_1} \land \cdots \land P_{c_2}$, where $P_k$, for each $c_1 \leq k \leq c_2$, is obtained from $P_i$ by replacing every occurrence of $i$ with $k$. We now define $\mathbf{critical}(S)$:

\begin{flalign*}
\mathbf{critical}(S) = &\left(\forall v \notin S:\; \exists x_1,\ldots,x_c \in S:\; \bigwedge\limits_{i=1}^c \adj(v,x_i) \land \bigwedge\limits_{i=1}^{c-1} \bigwedge\limits_{j=i+1}^c x_i \neq x_j \right) \land\\
&\left[\forall v \in S:\; \forall x_1,\ldots,x_c \notin S:\; \lnot \left( \bigwedge\limits_{i=1}^c \adj(v,x_i) \land \bigwedge\limits_{i=1}^{c-1} \bigwedge\limits_{j=i+1}^c x_i \neq x_j\right)\right]
\end{flalign*}

In an $f$-reversible process started by $S$, the first line of $\mathbf{critical}(S)$ ensures that every vertex that is not in $S$ has its state changed at time $t=1$, whereas the second line ensures that every vertex that is in $S$ does not have its state changed at time $t=1$. 

In conclusion, the problem is $\FPT$ when parameterized by $tw(G)$ and $m(f)$. Since $m(f) \leq \Delta(G)+1$, the problem is also in $\FPT$ when parameterized by $tw(G)$ and $\Delta(G)$.
\end{proof}

\section{Algorithmic results for the parameter \texorpdfstring{$k$}{k}}
\label{sec:tractablek}

In this section, we prove that the $f$-\nameref{prob:criticalSet} problem is in $\FPT$ when parameterized by $k$. The main difficulty of the $f$-\nameref{prob:criticalSet} problem is to balance two requirements simultaneously: the vertices in the chosen $f$-critical set $S$ must not change their state, while all remaining vertices must change to state $1$ at time $1$.

To handle this, we identify the vertices that are already forced to be in $S$, which vertices are forced to stay out of $S$, and which vertices may or may not be in $S$. This leads to a partition of $V$ into three sets. The vertices that must be in $S$ form the set $A$, the vertices that cannot be in $S$ form the set $C$, and the remaining vertices $B$ are those that may or may not be included in $S$. To form $S$, the algorithm adds all vertices from $A$ to $S$ and selects subsets of $B$ that induce connected subgraphs to be included in $S$. These subsets must be chosen so that no vertex in $S$ changes its state, and every vertex outside $S$ changes to state $1$. The remainder of this section formalizes this strategy.

Let $(G = (V,E),f,k)$ be an instance of the problem, where $n = |V|$ and $m = |E|$. A set $S  \subseteq V$ is said to be \emph{connected} if $G[S]$ is connected or $S = \varnothing$. Also, whenever we mention that a computation or algorithm runs in $\FPT$ time, we mean that it can be carried out in $O(g(k) \cdot (|V|+|E|)^c)$ time for some constant $c$ and computable function $g : \Nat \rightarrow \Nat$.

Let $S$ be an $f$-critical set of $G$ of size at most $k$, and $c$ be the $f$-reversible process on $G$ started by $S$. If there is a vertex $v \in V(G)$ such that $f(v) > k$, then $v \in S$, since its state cannot change from its initial value in $c$. Furthermore, if there is a vertex $v \in V(G)$ such that $d(v)-k \geq f(v)$, then $v \notin S$; otherwise, even if all vertices in $S$ are adjacent to $v$, its state would certainly change from $1$ at time $t=0$ to $0$ at time $t=1$ in $c$. Thus, if there is a vertex $v$ in $G$ such that $f(v) > k$ and $d(v)-k \geq f(v)$, then $(G,f,k)$ is a \textsc{NO}-instance, since otherwise we would have that $v \in S$ and $v \notin S$ for any $f$-critical set $S$ of $G$ of size at most $k$, a contradiction.

\begin{definition}
Let:
\begin{itemize}
    \item $A = \{v \in V \mid f(v) > k, d(v)-k < f(v)\}$: the set that must be a subset of any $f$-critical set of $G$ of size at most~$k$;
    \item $B = \{v \in V \mid f(v) \leq k, d(v)-k < f(v)\}$: the set of vertices with degree at most $2k-1$ that may or may not be in an $f$-critical set of $G$ of size at most~$k$; and
    \item $C = V \setminus (A \cup B) = \{v \in V \mid f(v) \leq k, d(v)-k \geq f(v)\}$: the set that must be disjoint from any $f$-critical set of $G$ of size at most~$k$.
\end{itemize}
\end{definition}

Now, let us define the set of vertices that are not guaranteed to have state $1$ at time $1$ solely due to the influence of $A$, and therefore must have their state at time $1$ explicitly controlled by the other vertices in an $f$-critical set of $G$.

\begin{definition}
Let:
\begin{itemize}
    \item $A' = \{v \in A \mid |N(v) \setminus A| \geq f(v)\}$;
    \item $B' = \{v \in B \mid |N(v) \cap A| < f(v)\}$; and
    \item $C' = \{v \in C \mid |N(v) \cap A| < f(v)\}$.
\end{itemize}
\end{definition}

Since $A$ is a subset of every $f$-critical set $S$ of $G$ of size at most $k$, the vertices in $A \setminus A', B \setminus (B' \cup S)$, and $C \setminus C'$ are guaranteed to have state $1$ at time $1$ in an $f$-reversible process on $G$ started by any superset of $A$, which includes any $f$-critical set of $G$ of size at most $k$. Therefore, we seek an $\FPT$-time algorithm that decides whether there is a pair of sets $R \subseteq B'$ and $S \subseteq B \setminus B'$, with $|R \cup S \cup A| \leq k$, satisfying the following three properties:
\begin{enumerate}[align=left, label=\textbf{Property \arabic*}:]
    \item for all $v \in S$, $|N(v) \setminus (R \cup S \cup A)| < f(v)$;
    \item for all $v \in A' \cup R$, $|N(v) 
    \setminus (R \cup S \cup A)| < f(v)$; and
    \item for all $v \in (B' \setminus R) \cup C'$, $|N(v) \cap (R \cup S \cup A)| \geq f(v)$.
\end{enumerate}

Let $c$ be an $f$-reversible process started by $R \cup S \cup A$ in $G$ such that $|R \cup S \cup A| \leq k$ and the pair $(R,S)$ satisfies the three properties. In the process $c$, Properties 1 and 2 ensure that every vertex in $R \cup S \cup A$ maintains its state, while Property 3 ensures that every vertex in $(B' \setminus R) \cup C'$ changes its state to $1$ at time $1$. Thus, Properties 1, 2 and 3 imply that $R \cup S \cup A$ is an $f$-critical set of $G$ of size at most $k$, since the vertices in $A' \setminus A$, $B \setminus (B' \cup S)$ and $C \setminus C'$ are guaranteed to have state $1$ at time $1$.

Let $S_1$ and $S_2$ be two connected sets. We say that $S_1$ and $S_2$ are adjacent if $S_1,S_2 \neq \varnothing$ and $S_1 \cup S_2$ is connected. They are non-adjacent otherwise. Any number of sets referred to as (non-)adjacent are assumed to be pairwise (non-)adjacent.

Let $R \subseteq B'$. We define the family $\mathcal{D}(R)$ of sets $S \subseteq B \setminus B'$ such that in an $f$-reversible process on $G$ started by any superset of $R \cup S \cup A$, the states of the vertices in $S$ remain unchanged at time~$1$. Because each connected component of $G[S]$ behaves independently under Property 1, we only need to consider non-adjacent connected subsets of $B \setminus B'$ to be part of $S$.
\begin{definition}
Let $R \subseteq B'$. Let $\mathcal{D}(R)$ be the family of connected sets $S$ such that $S \subseteq B \setminus B'$, $|S| \leq k - |R| - |A|$, and the pair $(R,S)$ satisfies Property~1.
\end{definition}

Notice that $\varnothing \in \mathcal{D}(R)$ for any $R \subseteq B'$ such that $|R| \leq k - |A|$, and $\mathcal{D}(R) = \varnothing$ for any $|R| > k - |A|$.

\begin{lemma}
\label[lemma]{lem:ifonlyifpt1}
Let $R \subseteq B'$ and $S \subseteq B \setminus B'$ such that $|S| \leq k - |R| - |A|$. The pair $(R,S)$ satisfies Property~1 if and only if the vertex set of each connected component of $G[S]$ is in $\mathcal{D}(R)$.
\end{lemma}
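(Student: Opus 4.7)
\medskip
\noindent\emph{Proof plan.}
The proof will handle both directions of the biconditional separately, and will rest on a single combinatorial observation: if $T$ is the vertex set of the connected component of $G[S]$ that contains a vertex $v$, then every neighbor of $v$ that lies in $S$ must be adjacent to $v$ in $G[S]$ and therefore must lie in $T$. Hence $N(v) \cap S = N(v) \cap T$, which in turn yields the key identity
\[
N(v) \setminus (R \cup S \cup A) \;=\; N(v) \setminus (R \cup T \cup A)
\]
for every $v \in T$. Because $R$ and $A$ are fixed throughout, this identity lets us freely translate Property~1 between the global set $S$ and any connected component $T$ of $G[S]$.

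For the forward direction, I would assume $(R,S)$ satisfies Property~1 and let $T$ be the vertex set of an arbitrary connected component of $G[S]$. The set $T$ is connected by construction, satisfies $T \subseteq S \subseteq B \setminus B'$, and inherits the size bound $|T| \leq |S| \leq k - |R| - |A|$. To conclude $T \in \mathcal{D}(R)$, it only remains to verify Property~1 for the pair $(R,T)$, which follows directly from the identity above combined with the assumed bound $|N(v) \setminus (R \cup S \cup A)| < f(v)$ for every $v \in T \subseteq S$.

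For the backward direction, I would assume that the vertex set of every connected component of $G[S]$ lies in $\mathcal{D}(R)$ and take an arbitrary $v \in S$. Letting $T$ be the component of $G[S]$ that contains $v$, membership $T \in \mathcal{D}(R)$ gives $|N(v) \setminus (R \cup T \cup A)| < f(v)$, and applying the identity in the reverse direction yields $|N(v) \setminus (R \cup S \cup A)| < f(v)$, establishing Property~1 for $(R,S)$.

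I do not anticipate a genuine obstacle here: the statement is essentially a decomposition lemma saying that Property~1 is local to connected components of $G[S]$. The only point needing a little care is the size bound $|T| \leq k - |R| - |A|$ required by the definition of $\mathcal{D}(R)$, which follows immediately from $T \subseteq S$, and the explicit verification that neighbors of $v$ in $S \setminus T$ cannot exist because they would place $v$ in a larger connected component than $T$.
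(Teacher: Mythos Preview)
Your proposal is correct and follows essentially the same approach as the paper: both arguments hinge on the observation that a vertex $v\in S$ has no neighbors in $S$ outside its own connected component, so $N(v)\setminus(R\cup S\cup A)=N(v)\setminus(R\cup T\cup A)$, and both verify the size and containment conditions for $\mathcal{D}(R)$ via $T\subseteq S$. The only cosmetic difference is that the paper uses the weaker inequality $|N(v)\setminus(R\cup S\cup A)|\leq |N(v)\setminus(R\cup S_i\cup A)|$ in the backward direction, whereas you invoke the equality directly; your version is in fact slightly cleaner.
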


\begin{proof}
Assume that the pair $(R,S)$ satisfies Property 1. Let $Q$ be the vertex set of an arbitrary connected component of $G[S]$. Thus, $|Q| \leq |S| \leq k - |R| - |A|$ and $Q \subseteq S \subseteq B \setminus B'$. Let $v \in Q$. Since the pair $(R,S)$ satisfies Property 1, $|N(v) \setminus (R \cup S \cup A)| < f(v)$. Moreover, since a vertex in $Q$ is not adjacent to any vertex in another connected component of $G[S]$ in $G$ (otherwise they would be the same connected component), it follows that $|N(v) \setminus (R \cup Q \cup A)| = |N(v) \setminus (R \cup S \cup A)| < f(v)$. Hence, the pair $(R,Q)$ satisfies Property~1, and therefore $Q \in \mathcal{D}(R)$.

Now, let $S_1,\ldots,S_x$ be the vertex sets of the connected components of $G[S]$, and assume that each one is in $\mathcal{D}(R)$. Let $v \in S_i$ for some arbitrary $1 \leq i \leq x$. Since $(R,S_i)$ satisfies Property 1, we know that $|N(v) \setminus (R \cup S_i \cup A)| < f(v)$, which implies that $|N(v) \setminus (R \cup S \cup A)| \leq |N(v) \setminus (R \cup S_i \cup A)| < f(v)$. Therefore, since $S = S_1  \cup \ldots \cup S_x$ and $|S| \leq k - |R| - |A|$, the pair $(R,S)$ satisfies Property~1.
\end{proof}

Our algorithm must determine whether there is a set $R \subseteq B'$ and non-adjacent connected sets $S_1, \ldots, S_k$ in $\mathcal{D}(R)$ such that the sum of their sizes is at most $k - |A| - |R|$, and the pair $(R,S)$ satisfies Properties~2 and~3, where $S = S_1 \cup \ldots \cup S_k$. By \Cref{lem:ifonlyifpt1}, the pair $(R,S)$ already satisfies Property~1. Note that some of these $k$ non-adjacent connected sets may be empty, since $G[S]$ may have fewer than $k$ connected components, or even none at all.

To achieve this in $\FPT$ time, we must bound the size of $A'$, $B'$ and $C'$ in a \textsc{YES}-instance by a function of $k$. With this, our algorithm can compute all connected sets in $\mathcal{D}(R)$ for every $R \subseteq B'$ and classify each such set according to its influence on the vertices in $A',B'$ and $C'$ in a number of categories that depends only on $k$. 

Furthermore, for our algorithm to be able to select only non-adjacent sets in $\mathcal{D}(R)$ in $\FPT$ time, we must also bound the number of connected sets adjacent to any given connected set in $\mathcal{D}(R)$ by a function of $k$.

\begin{lemma}
\label[lemma]{lem:blclbounded}
If $|A| > k$ or $|A'|+|B'|+|C'| > 2k^2$, then $(G,f,k)$ is a \textsc{NO}-instance.
\end{lemma}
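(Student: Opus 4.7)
The plan is to dispose of $|A| > k$ by a one-line argument and spend the rest on the bound $|A'| + |B'| + |C'| \leq 2k^2$. The first half is immediate: every $f$-critical set of $G$ of size at most $k$ must contain $A$, so if $|A| > k$ no such set exists.

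For the second half I would argue by contrapositive. Assume $(G,f,k)$ is a YES-instance with $|A| \leq k$, fix an $f$-critical set $S$ of size at most $k$ (so $A \subseteq S$), and set $T := S \setminus A$. Since $C$ must be disjoint from any $f$-critical set of size at most $k$ and $A \cap T = \varnothing$, the set $T$ is contained in $B$. Every $v \in B$ satisfies $d(v) - k < f(v) \leq k$, so each vertex of $T$ has degree at most $2k-1$, and $|T| \leq k - |A| \leq k$.

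The key step I expect to take the most care is showing that every vertex $v \in A' \cup B' \cup C'$ either lies in $T$ or has a neighbor in $T$. The three cases are short direct computations using the defining inequalities:
\begin{itemize}
    \item For $v \in A' \subseteq S$, $v \notin T$. Since $|N(v) \setminus A| \geq f(v)$, we get $|N(v) \cap A| \leq d(v) - f(v)$; the requirement that $v$ keep state $1$ gives $|N(v) \cap S| \geq d(v) - f(v) + 1$, so $|N(v) \cap T| = |N(v) \cap S| - |N(v) \cap A| \geq 1$.
    \item For $v \in B' \setminus T$ and $v \in C'$, both outside $S$, the requirement that $v$ flip to state $1$ at time $1$ gives $|N(v) \cap S| \geq f(v)$; since $|N(v) \cap A| < f(v)$, we again obtain $|N(v) \cap T| \geq 1$.
\end{itemize}

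The rest is a standard double counting. Every vertex in $(A' \cup B' \cup C') \setminus T$ contributes at least one edge into $T$, so
\[
|(A' \cup B' \cup C') \setminus T| \;\leq\; \sum_{u \in T} d(u) \;\leq\; |T|(2k-1).
\]
Adding $|T|$ to cover the vertices of $A' \cup B' \cup C'$ that lie in $T$ yields $|A' \cup B' \cup C'| \leq 2k\,|T| \leq 2k^2$. Since $A$, $B$, $C$ partition $V$, the sets $A'$, $B'$, $C'$ are pairwise disjoint, so $|A'| + |B'| + |C'| = |A' \cup B' \cup C'| \leq 2k^2$, contradicting the hypothesis of the lemma. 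The main obstacle is the three-case verification above; the degree bound on $T$ and the double count are routine once that is in place.
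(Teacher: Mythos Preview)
Your proof is correct and follows essentially the same approach as the paper. Your set $T = S \setminus A$ is exactly the paper's $R \cup S$, and both arguments bound $|A' \cup B' \cup C'|$ by observing that each vertex in this set (outside $T$) must have a neighbor in $T$, then applying the degree bound $2k-1$ on vertices of $B$; the only cosmetic difference is that you spell out the $A'$ case explicitly via $|N(v)\cap S| \geq d(v)-f(v)+1$, whereas the paper packages this as ``contradicts Properties~2 or~3.''
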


\begin{proof}
The set $A$ is contained in any $f$-critical set of $G$, since their states never change in any $f$-reversible process started by a set of size at most $k$. Hence, if $|A| > k$, there can be no $f$-critical set of $G$ of size at most $k$.

Now, assume that $(G,f,k)$ is a \textsc{YES}-instance. Then there are sets $R \subseteq B'$ and $S \subseteq (B \setminus B')$ such that $|R| + |S| + |A| \leq k$ and $(R,S)$ satisfies Properties~1,~2 and~3. Since every vertex in $R \cup S$ has degree at most $2k-1$, there are at most $|R \cup S| \cdot (2k-1) \leq k \cdot (2k-1)$ edges between a vertex in $R \cup S$ and a vertex in $A' \cup (B' \setminus R) \cup C'$. Therefore, $|A'| + |B' \setminus R| + |C'| \leq k \cdot (2k-1) = 2k^2 - k$; otherwise, by the pigeonhole principle, there would be a vertex in $A' \cup (B' \setminus R) \cup C'$ adjacent to no vertex in $R \cup S$, which contradicts Properties~2 or~3.

Since $R \subseteq B'$, we have $|A'| + |B' \setminus R| + |C'| = |A'| + |B'| + |C'| - |R|$. Because $|R| \leq k$, it follows that $|A'| + |B'| + |C'| \leq 2k^2$.
\end{proof}

\begin{lemma}
\label[lemma]{lem:limadj}
Let $R \subseteq B'$ be a set of size at most $k$, and let $S$ be a connected set in $\mathcal{D}(R)$. There are $O(n \cdot (6k)^{k-1})$ sets in $\mathcal{D}(R)$, which can be constructed in $O(n \cdot (6k)^{k+1})$ time. Furthermore, there are $O((6k)^{k+1})$ sets adjacent to $S$ in $\mathcal{D}(R)$, which can be constructed in $O((4k)^{2k+2})$ time.
\end{lemma}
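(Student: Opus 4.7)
The plan rests on a single structural observation: every $v \in B$ satisfies $d(v) < f(v) + k \leq 2k$, so the induced subgraph $G[B \setminus B']$ has maximum degree at most $2k-1$. Combined with the fact that every $S \in \mathcal{D}(R)$ is a connected subset of $B \setminus B'$ of size at most $k - |R| - |A| \leq k$, this lets us apply the standard enumeration estimate stating that, in any graph of maximum degree $\Delta$, the number of connected vertex subsets of size $s$ containing a fixed vertex is at most $(e\Delta)^{s-1}$. With $\Delta \leq 2k-1$ and the arithmetic inequality $e(2k-1) < 6k$, this yields at most $(6k)^{s-1}$ such sets of each size $s \leq k$ through any prescribed vertex.

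First, summing the estimate over $s \in \{1,\ldots,k\}$ gives a geometric series bounded by $O((6k)^{k-1})$ connected admissible subsets through each vertex, and summing over the $n$ possible start vertices yields the claimed $O(n \cdot (6k)^{k-1})$ upper bound on $|\mathcal{D}(R)|$. To actually construct the family, the plan is to use a branching enumeration: from each start vertex, grow the current connected set by repeatedly branching on include/exclude for a frontier neighbor, pruning as soon as the size exceeds $k - |R| - |A|$ or a vertex outside $B \setminus B'$ is reached. For each generated candidate, Property~1 can be verified by scanning the neighborhoods of its $O(k)$ vertices in $O(k \cdot \Delta) = O(k^2)$ time, and this overhead is comfortably absorbed by an extra $(6k)^2$ factor, giving the $O(n \cdot (6k)^{k+1})$ construction time.

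Next, for adjacent sets, the key observation is that any connected $S' \in \mathcal{D}(R)$ adjacent to $S$ must intersect the closed neighborhood $S \cup N(S)$. Since $|S| \leq k$ and each vertex of $S$ has degree at most $2k-1$ in $G$, this gives $|S \cup N(S)| \leq k + k(2k-1) \leq 2k^2$. Restricting the enumeration above to start vertices in $(S \cup N(S)) \cap (B \setminus B')$ therefore produces at most $2k^2 \cdot O((6k)^{k-1})$ sets; using $2k^2 \leq (6k)^2$ this simplifies to $O((6k)^{k+1})$. The construction cost accumulates the per-start enumeration together with a final connectivity check of $S \cup S'$ and a duplicate filter across start vertices, and the claimed bound $O((4k)^{2k+2})$ is a generous overestimate that absorbs all such polynomial factors.

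The main obstacle is keeping the exponents tight: a crude bound of the form $\Delta^s$, or iterating over all $\binom{|S \cup N(S)|}{\leq k}$ subsets, would overshoot $(6k)^{k+1}$. The argument must therefore really use the $(e\Delta)^{s-1}$ enumeration estimate together with the degree cap on $B$ (not on all of $G$) and rely on a union bound across start vertices so that the sum over $v \in S \cup N(S)$ contributes only a $k^2$ factor rather than an $n$ or exponential one. Everything after that is a routine branching computation.
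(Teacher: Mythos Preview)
Your proposal is correct and follows essentially the same route as the paper: both use the degree cap $d(v)\le 2k-1$ on $B$, the classical $(e\Delta)^{s-1}$ bound on connected vertex subsets through a fixed vertex, and the estimate $|S\cup N(S)|\le 2k^2$ to handle adjacent sets. The only cosmetic difference is that, for the construction time of the adjacent sets, the paper explicitly restricts to the ball $Q$ of radius $k$ around $S$ (bounding $|Q|=O(k(2k)^k)$) and invokes the Komusiewicz--Sommer enumeration algorithm on $G[Q]$, whereas you leave this implicit in the branching description; both arrive at the same asymptotics.
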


\begin{proof}
Bollobás~\cite[Section~3.45]{bollobas2006b} showed that the number of connected induced subgraphs of a graph $H$ with exactly $k$ vertices that contain a given vertex is at most $(e \cdot (\Delta(H) - 1))^{k-1}$. Hence, the number of connected induced subgraphs of $H$ with \emph{at most} $k$ vertices that contain a given vertex is 
\begin{equation} 
\tag{E1}\label{eq:numbadjSatmk}
\sum_{i=1}^k (e \cdot (\Delta(H) - 1))^{i-1} = O((e \cdot (\Delta(H) - 1))^{k-1}).
\end{equation}

Moreover, Komusiewicz and Sommer~\cite{komusiewicz2021} showed that, for every $k \geq 1$, the number of connected induced subgraphs of a graph $H$ with \emph{exactly}~$k$ vertices is at most
\begin{equation}
\tag{E2}\label{eq:numbexack}
|V(H)|/k \cdot (e \cdot (\Delta(H) - 1))^{k-1}
\end{equation}
and presented an algorithm that constructs all connected induced subgraphs of $H$ with \emph{at most} $k$ vertices in time
\begin{equation}
\tag{E3}\label{eq:algatmk}
O((\Delta(H) + k) \cdot |V(H)|/k \cdot (e \cdot (\Delta(H) - 1))^{k-1}).
\end{equation}

Any set adjacent to $S$ must contain at least one vertex in $S$ or in $N(S)$. Since every vertex in $S \subseteq B$ has degree at most $2k-1$, we have that $|S \cup N(S)| \leq k + k (2k-1)=2k^2$. Therefore, by \Cref{eq:numbadjSatmk}, there are $$O(2k^2 (e \cdot (2k-2))^{k-1}) \subseteq O((6k)^{k+1})$$ connected sets adjacent to $S$ in $\mathcal{D}(R)$.

Let $Q$ be the set of vertices at distance at most $k$ from some vertex in $S$. Since every vertex in some connected set adjacent to $S$ in $\mathcal{D}(R)$ must be at distance at most $k$ from some vertex in $S$, all such sets are in $G[Q]$. Since $|Q| \leq \sum_{d=0}^k k \cdot (2k-1)^d = O(k \cdot (2k)^k)$, it is possible to construct all connected sets adjacent to $S$ in $\mathcal{D}(R)$ by listing all connected sets of size at most $k$ in $G[Q]$ and, for each of them, checking, in $O(k^2)$ time, whether they belong to $\mathcal{D}(R)$ and are adjacent to $S$. By \Cref{eq:algatmk}, this takes $$O((3k-1) \cdot |Q|/k \cdot (e \cdot (2k-2))^{k-1} \cdot k^2) \subseteq O((4k)^{2k+2}).$$ 

By \Cref{eq:numbexack}, for every $k \geq 1$, the number of connected induced subgraphs of $G[B \setminus B']$ with exactly $k$ vertices is at most $|B|/k \cdot (e \cdot (2k-2))^{k-1}$. Thus, the number of sets in $\mathcal{D}(R)$ is at most $$1+\sum_{i=1}^k |B|/i \cdot (e \cdot (2k-2))^{i-1} = O(n \cdot (6k)^{k-1}).$$ The sets in $\mathcal{D}(R)$ can be constructed by listing all connected sets of size at most $k$ in $G[B \setminus B']$ and, for each of them, checking, in $O(k^2)$ time, whether they belong to $\mathcal{D}(R)$. By \Cref{eq:algatmk}, this takes $O((3k-1) \cdot n/k \cdot (e \cdot (2k-2))^{k-1} \cdot k^2) \subseteq O(n \cdot (6k)^{k+1})$ time.
\end{proof}

We next classify the sets in $\mathcal{D}(R)$ for any fixed $R \subseteq B'$. Each class is defined by the size of its sets and by the influence its sets have on the vertices of $A'$, $B'$, and $C'$. These classes are important for efficiently find sets $S_1,\ldots,S_k \in \mathcal{D}(R)$ such that the pair $(R, S_1 \cup \ldots \cup S_k)$ satisfies Properties 2 and 3.

\begin{definition}
Let $Q \subseteq V$ and $x \in \Nat$. Define $\mathcal{F}$ and $\mathcal{F}(x)$ as the sets of all functions from $A' \cup B' \cup C'$ into $\Nat$ and into $\{0,1,\ldots,x\}$, respectively.
\end{definition}

\begin{definition}
Let $R \subseteq B'$, $h \in \mathcal{F}$ and $l \in \Nat$. The class $\mathcal{Z}_R(l,h)$ consists of all sets in $S \in \mathcal{D}(R)$ such that $|S| = l$ and, for every $v \in A' \cup B' \cup C'$, $|N(v) \cap S| = h(v)$.
\end{definition}

\Cref{fig:Zlf} illustrates this definition. The portion of the graph $H$ in the figure is $H[A' \cup B' \cup C' \cup S]$, that is, it may have other vertices and edges. Assume $A' = \{q\}$, $B' = \{r,s\}$, $C' = \{t\}$, and let $S = \{u,v,w\} \in \mathcal{D}(R)$. Then $S \in \mathcal{Z}_R(3,h)$, where $h(q)=1, h(r)=2, h(s)=3$ and $h(t) = 0$. It is important to note that $\mathcal{Z}_R(0,h) = \{\varnothing\}$ for $h \in \mathcal{F}(0)$. Moreover, the classes are pairwise disjoint, and $\mathcal{Z}_R(l,h) = \varnothing$ whenever $h(v) > l$ for some $v \in A' \cup B' \cup C'$.

\begin{figure}[hbt]
    \centering
    \begin{tikzpicture}[auto,font=\footnotesize]
        \tikzset{vertex/.style={draw, circle, minimum size=12pt,inner sep=1pt}};
        \node[font=\Large] () at (-3,4) {$A'$};
        \node[font=\Large] () at (0,4) {$B'$};
        \node[font=\Large] () at (3,4) {$C'$};
        \node[font=\Large] () at (3,-0.35) {$S$};
        
        \node[vertex] (q) at (-3,3) {$q$};
        \node[vertex] (r) at (-1,3) {$r$};
        \node[vertex] (s) at (1,3) {$s$};
        \node[vertex] (t) at (3,3) {$t$};

        \node[vertex] (u) at (200:2) {$u$};
        \node[vertex] (v) at (0:0) {$v$};
        \node[vertex] (w) at (-20:2) {$w$};

        \draw[color=blue] (-4,3.5) rectangle (4,2.5);
        \draw[color=blue,-] (-2,3.5) to (-2,2.5);
        \draw[color=blue,-] (2,3.5) to (2,2.5);
        \draw[color=blue]  (0,-0.35) ellipse (2.5 and 1.15);

        \draw[-] (u) to (v) to (w) to (u);
        \draw[-] (v) to (q);
        \draw[-] (u) to (r);
        \draw[-] (v) to (r);
        \draw[-] (u) to (s);
        \draw[-] (v) to (s);
        \draw[-] (w) to (s);
    \end{tikzpicture}
    \caption{\label{fig:Zlf}Let $A' = \{q\}$, $B' = \{r,s\}$, $C' = \{t\}$, and $S = \{u,v,w\} \in \mathcal{D}(R)$. Then $S \in \mathcal{Z}_R(3,h)$, where $h(q)=1, h(r)=2, h(s)=3$ and $h(t) = 0$.}
\end{figure}
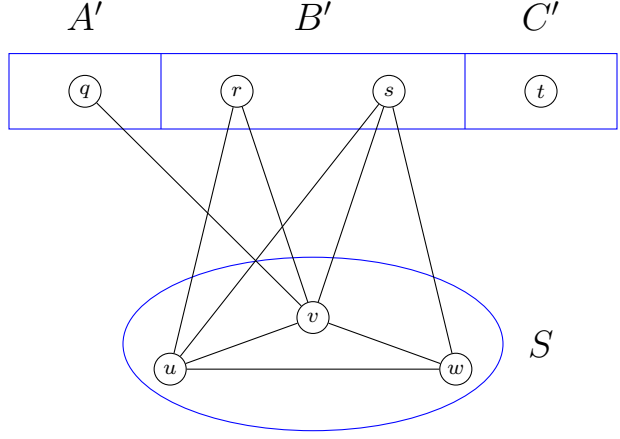

\begin{lemma}
\label[lemma]{lem:dr}
Let $R \subseteq B'$. Given $\mathcal{D}(R)$, $A'$, $B'$, and $C'$, all sets in $\mathcal{D}(R)$ can be placed into their respective classes in $O(|\mathcal{D}(R)| \cdot (|A' \cup B' \cup C'| + k^2))$ time.
\end{lemma}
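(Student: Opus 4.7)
The plan is to process each $S \in \mathcal{D}(R)$ independently, computing its signature $(l,h)$ with $l = |S|$ and $h(v) = |N(v) \cap S|$ for each $v \in A' \cup B' \cup C'$, and then inserting $S$ into a dictionary keyed by this signature. As a one-time preprocessing step (whose cost is subsumed by the input sizes), I would build a flag array over $V$ marking membership in $A' \cup B' \cup C'$, together with a fixed enumeration $v_1, \ldots, v_r$ of $A' \cup B' \cup C'$ where $r = |A' \cup B' \cup C'|$.

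For a given $S \in \mathcal{D}(R)$ I would first compute $l = |S| \leq k$ in $O(k)$ time and initialize an integer array $h$ of length $r$ to zero in $O(r)$ time. Then I iterate over the vertices $u \in S$ and, for each $u$, over its neighborhood $N(u)$; whenever a neighbor $w$ is flagged as belonging to $A' \cup B' \cup C'$, I increment $h(w)$. Since $S \subseteq B \setminus B'$ forces $d(u) \leq 2k-1$ for every $u \in S$, and $|S| \leq k$, this accumulation runs in $O(k \cdot (2k-1)) = O(k^2)$ time, after which $h$ holds exactly the required neighbor-count function. Finally, I insert $S$ into the dictionary entry corresponding to the key $(l, h(v_1), \ldots, h(v_r))$; storing the dictionary as a trie over the canonical ordering of $A' \cup B' \cup C'$ makes an insertion traverse at most $r + 1$ trie nodes and hence cost $O(r)$. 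Summing the per-set cost $O(r + k^2) = O(|A' \cup B' \cup C'| + k^2)$ over all of $\mathcal{D}(R)$ yields the claimed bound.

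The only delicate point will be ensuring that the dictionary access really costs $O(r)$ per insertion rather than hiding an extra factor from hashing collisions or key comparison; the trie over the bounded-integer tuple sidesteps this cleanly, since every coordinate of the key lies in $\{0, 1, \ldots, k\}$. Everything else is routine accounting that relies solely on the degree bound $d(u) \leq 2k-1$ for $u \in B$ already established in the definitions of $A$, $B$, $C$.
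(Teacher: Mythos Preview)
Your proof is correct and follows essentially the same approach as the paper: for each $S\in\mathcal{D}(R)$, initialize $h$ in $O(|A'\cup B'\cup C'|)$ time, iterate over all $u\in S$ and $w\in N(u)$ using the bounds $|S|\le k$ and $d(u)\le 2k-1$ to get $O(k^2)$, and then place $S$ in its class. You add a bit more implementation detail (the flag array and the trie for the dictionary) than the paper, which simply asserts the placement step, but the argument is the same.
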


\begin{proof}
The classification procedure works as follows: for each set $S \in \mathcal{D}(R)$, it initializes a function $h \in \mathcal{F}(0)$. Then, for each $v \in S$ and $w \in N(v)$, it increments $h(w)$ whenever $w \in A' \cup B' \cup C'$. After that, it places $S$ in $\mathcal{Z}_R(|S|,h)$. Since $S \in \mathcal{D}(R)$, we have that $|S| \leq k$, and $d(v) \leq 2k-1$ for every $v \in S$. Therefore, the procedure runs in $O(|\mathcal{D}(R)| \cdot (|A' \cup B' \cup C'| + k^2))$ time.
\end{proof}

The general idea of our algorithm is that, for each $R \subseteq B'$, it checks whether it is possible to construct a set $S \subseteq B \setminus B'$ by taking the union of non-adjacent sets in the classes to ensure the pair $(R,S)$ satisfies all three properties. 

To do this, our algorithm first fixes $R \subseteq B'$. This establishes a baseline contribution to the number of neighbors for each vertex $v \in A' \cup B' \cup C'$ (namely, $|N(v) \cap A|$ and $|N(v) \cap R|$). The algorithm must then select $k$ sets $S = S_1 \cup \dots \cup S_k$ to provide the remaining neighbors required for $(R,S)$ to satisfy Properties~2 and~3. 

This is achieved by selecting each set $S_i$ from specific classes $\mathcal{Z}_R(l_i, h_i)$. These classes are crucial because they capture the exact contribution $h_i(v)$ that any choice of $S_i$ from that class would make to the neighborhood of $v$. The algorithm searches for a combination of $k$ such classes whose cumulative contribution $\sum_{i=1}^k h_i(v)$ is sufficient to satisfy the threshold inequalities for both Properties~2 and~3 simultaneously. The non-adjacency of the chosen sets ensures that their contributions do not overlap, allowing the total contribution of $S = S_1 \cup \dots \cup S_k$ to be computed correctly.

Additionally, the chosen sets must be in $\mathcal{D}(R)$ and be non-adjacent to ensure Property~1 by \Cref{lem:dr}. To select non-adjacent sets $S_1,\ldots,S_k$ in $\mathcal{D}(R)$, each $S_i$ in its specific class $\mathcal{Z}_R(l_i,h_i)$, the algorithm constructs a graph $H$. In $H$, each vertex represents a set in $\mathcal{D}(R)$, and two vertices are adjacent if the corresponding sets are either in the same class or adjacent in the original graph. The algorithm then searches for an independent set in $H$ of size $k$, that is, a set of $k$ vertices that are pairwise non-adjacent in $H$. $H$ is constructed in a way that, if there is an independent set $Q$ in $H$ of size $k$, then the connected sets represented by the vertices in $Q$ are exactly the non-adjacent sets $S_1,S_2,\ldots,S_k$ such that $S_i \in \mathcal{Z}_R(l_i,h_i)$ for every $1 \leq i \leq k$.

\begin{algorithm}[H]
\caption{\label{alg:fptk}Algorithm for the $f$-\nameref{prob:criticalSet} problem}
\SetAlgoLined
\SetKwInOut{Input}{Input}
\SetKw{Or}{\itshape or}
\BlankLine
\Input{A graph $G$, a threshold function $f$, and an integer $k$.}
\BlankLine
Compute the sets $A$, $B$, $C$, $A'$, $B'$, and $C'$\\
\If{$|A| > k$ \Or $|A'|+|B'|+|C'| > 2k^2$ \Or $\exists v$ s.t. $f(v) > k$ and $d(v) - k \geq f(v)$}{
    \Return NO
}
Compute $|N(v) \cap A|$ for each $v \in A' \cup B' \cup C'$\\
\ForEach{$R \subseteq B'$ of size at most $k-|A|$}{
    Compute $|N(v) \cap R|$ for each $v \in A' \cup B' \cup C'$\\
    Compute $B' \setminus R$\\    
    Compute $\mathcal{D}(R)$ and classify its sets into their respective classes $\mathcal{Z}_R(l,h)$\\
    \ForEach{$l_1,l_2,\ldots,l_k \in \Nat$ such that $l_1+l_2+\ldots+l_k \leq k-|A|-|R|$}{
        \ForEach{$h_1 \in \mathcal{F}_R(l_1)$, $h_2 \in \mathcal{F}_R(l_2),\ldots,h_k \in \mathcal{F}_R(l_k)$ such that\\
        \nonl \quad 1. for each $v \in A' \cup R$, $\sum\limits_{i=1}^k h_i(v) \geq d(v)-f(v)+1 - |N(v) \cap A| - |N(v) \cap R|$; and\\
        \nonl \quad 2. for each $v \in (B' \setminus R) \cup C'$, $\sum\limits_{i=1}^k h_i(v) \geq f(v) - |N(v) \cap A| - |N(v) \cap R|$\\\nonl \quad}{
            Construct the graph $H$, where each vertex represents one set in $\mathcal{Z}_R(l_i, h_i)$ for each $1 \leq i \leq k$. Two vertices are neighbors if they represent sets that belong to the same class $\mathcal{Z}_R(l_i, h_i)$ or that are adjacent to each other\\
            \If{there is an independent set in $H$ of size $k$}{
                \Return YES
            }
        }
    }
}
\Return NO
\end{algorithm}

Since $S_1,S_2,\ldots,S_k$ are non-adjacent sets, Items~1 and~2 of the \textbf{foreach} in Line~10 ensure that the pair $(R,S_1 \cup \ldots \cup S_k)$ satisfies Properties~2 and~3, respectively. Furthermore, because the sets $S_1,S_2,\ldots,S_k$ belong to $\mathcal{D}(R)$ and are non-adjacent, Property~1 is also satisfied.

It is important to highlight that, when constructing the vertices of $H$ in Line~11, even if $\mathcal{Z}_R(l_r, h_r) = \mathcal{Z}_R(l_s, h_s)$ for some $r \neq s$, the algorithm still creates separate vertices for each set in $\mathcal{Z}_R(l_r, h_r)$ and $\mathcal{Z}_R(l_s, h_s)$. In this case, there is no edge between a vertex representing a set in $\mathcal{Z}_R(l_r, h_r)$ and a vertex representing a set in $\mathcal{Z}_R(l_s, h_s)$ unless the corresponding sets are adjacent. That is, the algorithm treats both classes as if they are distinct when constructing $H$.

\begin{lemma}
\label[lemma]{lem:time}
\Cref{alg:fptk} runs in $O(n \cdot k^{\Theta(k^3)})$ time.
\end{lemma}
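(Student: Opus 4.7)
The plan is to analyze \Cref{alg:fptk} by bounding the number of iterations of each nested foreach loop together with the cost of the work performed inside, then multiplying these bounds. By \Cref{lem:blclbounded}, we may assume $|A| \leq k$ and $|A' \cup B' \cup C'| \leq 2k^2$; otherwise Line~3 returns NO in polynomial time. Also, every $v \in B$ satisfies $d(v) \leq 2k-1$ by definition of $B$.

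First I would count the outer iterations of Line~5: $R$ ranges over subsets of $B'$ of size at most $k - |A| \leq k$, and since $|B'| \leq 2k^2$, the number of such subsets is at most $\binom{2k^2}{\leq k} \leq (2k^2)^k = k^{O(k)}$. For each such $R$, the preprocessing in Lines~6-8 takes $O(n \cdot k^{O(k)})$ time: computing $|N(v) \cap R|$ and $B' \setminus R$ is polynomial; constructing $\mathcal{D}(R)$ takes $O(n \cdot (6k)^{k+1})$ time by \Cref{lem:limadj}, which also yields $|\mathcal{D}(R)| = O(n \cdot k^{O(k)})$; and the classification into classes $\mathcal{Z}_R(l,h)$ takes $O(|\mathcal{D}(R)| \cdot k^2) = O(n \cdot k^{O(k)})$ by \Cref{lem:dr}. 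Next, I would bound the two inner foreach loops. Line~9 iterates over tuples $(l_1, \ldots, l_k) \in \{0, 1, \ldots, k\}^k$, giving at most $(k+1)^k$ tuples. For each fixed $(l_1, \ldots, l_k)$, Line~10 iterates over tuples $(h_1, \ldots, h_k)$ where each $h_i \in \mathcal{F}(l_i) \subseteq \mathcal{F}(k)$ is a function from $A' \cup B' \cup C'$ to $\{0, \ldots, k\}$. Hence there are at most $(k+1)^{|A' \cup B' \cup C'|} \leq (k+1)^{2k^2}$ choices for each $h_i$, and thus at most $(k+1)^{2k^3} = k^{O(k^3)}$ tuples in total.

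The main obstacle is to bound the time spent in Lines~12-13, which construct $H$ and search for an independent set of size $k$. A naive approach is too expensive: $H$ may have up to $k \cdot |\mathcal{D}(R)| = O(n \cdot k^{O(k)})$ vertices, so a brute-force IS search could take $\Omega(n^k)$ time, destroying the linear dependence on $n$. To overcome this, I would invoke \Cref{lem:limadj}, which bounds the number of sets adjacent to any fixed $S \in \mathcal{D}(R)$ by $(6k)^{k+1} = k^{O(k)}$, and argue that it suffices to restrict each class $\mathcal{Z}_R(l_i, h_i)$ to any subcollection $\mathcal{T}_i$ of size $M = k \cdot (6k)^{k+1} + 1$ (or the whole class if smaller). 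Indeed, given an IS $S_1, \ldots, S_k$ with $S_i \in \mathcal{Z}_R(l_i, h_i)$, one can iteratively replace each $S_i \notin \mathcal{T}_i$ by some $S_i' \in \mathcal{T}_i$ non-adjacent to the $k-1$ currently fixed sets: each such fixed set excludes at most $(6k)^{k+1}$ elements of $\mathcal{T}_i$, for a total of $(k-1) \cdot (6k)^{k+1} < M$, so a valid replacement always exists.

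Restricting the search to the at most $kM = k^{O(k)}$ vertices in $\bigcup_i \mathcal{T}_i$, I would build the induced subgraph of $H$ in $k^{O(k)}$ time using the efficient enumeration of adjacent sets from \Cref{lem:limadj}, and then enumerate one representative per class to find an IS of size $k$ in $M^k \cdot O(k^2) = k^{O(k^2)}$ time. Therefore, the total cost for each fixed $R$ is
\[
O\!\left(n \cdot k^{O(k)}\right) + (k+1)^k \cdot k^{O(k^3)} \cdot k^{O(k^2)} = O\!\left(n \cdot k^{O(k)} + k^{O(k^3)}\right),
\]
and summing over the $k^{O(k)}$ choices of $R$ yields the claimed $O(n \cdot k^{\Theta(k^3)})$ bound.
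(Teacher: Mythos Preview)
Your proof is correct and the overall skeleton---bounding the number of choices for $R$, for $(l_1,\ldots,l_k)$, and for $(h_1,\ldots,h_k)$, and invoking \Cref{lem:limadj,lem:dr} for Lines~6--8---matches the paper's proof. The one place where you diverge is the treatment of the independent-set search in $H$ (Lines~12--13).

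The paper handles this step by observing that $\Delta(H) = O((6k)^{k+1})$ (again by \Cref{lem:limadj}) and then invoking the standard bounded-degree machinery for \textsc{Independent Set}: any graph on more than $(k-1)(\Delta(H)+1)$ vertices trivially contains an independent set of size $k$, so one first reduces to a kernel of that size and then runs a $(\Delta(H)+1)^k$ bounded search tree. This gives $O(|V(H)| + (\Delta(H)+1)^{k+1}k)$ per inner iteration, and since the paper builds the full graph $H$ on $O(n\cdot (6k)^{k-1})$ vertices inside the innermost loop, the $n$ factor ends up multiplied by $k^{\Theta(k^3)}$.

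Your approach is different but closely related: you truncate each class $\mathcal{Z}_R(l_i,h_i)$ to at most $M = k\cdot(6k)^{k+1}+1$ representatives and justify this by an explicit replacement argument (any $S_i$ outside $\mathcal{T}_i$ can be swapped for some $S_i'\in\mathcal{T}_i$ since the other $k-1$ sets together forbid fewer than $M$ candidates). This is essentially the same kernelization idea carried out by hand, and it is sound. One pleasant side effect is that your $n$-dependent work (computing and classifying $\mathcal{D}(R)$) sits outside the $(h_1,\ldots,h_k)$ loop, so you actually obtain the additively better bound $O(n\cdot k^{O(k)} + k^{O(k^3)})$, which of course still lies in $O(n\cdot k^{\Theta(k^3)})$ as claimed.
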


\begin{proof}
Let $n = |V(G)|$. Lines~1 through~4 of \Cref{alg:fptk} run in $O(n)$ time. From Line~4 onward, observe that $|A'| \leq |A| \leq k$ and $|A'|+|B'|+|C'| \leq 2k^2$. There are $O((2k^2)^k)$ subsets $R$ of $B'$ of size at most $k$, and it takes $O(k \cdot (2k^2)^k)$ time to generate all of them.

Line~6 runs in $O(nk)$ time. Line~7 runs in $O(n)$ time. By \Cref{lem:limadj,lem:dr}, Line~8 runs in $O(n \cdot (6k)^{k+1} + n \cdot (6k)^{k-1} \cdot k^2) \subseteq O(n \cdot (6k)^{k+1})$ time.

There are at most $\binom{2k-1}{k} < ((2k-1)/k \cdot e)^k = O(6^k)$ sequences of non-negative integers $l_1,\ldots,l_k$ such that $l_1+\ldots+l_k \leq k-|A|-|R|$. Moreover, these sequences can be generated in $O(k \cdot 6^k)$ time.

Since every function $h \in \mathcal{F}(l)$ with $l \leq k$ satisfies $h(v) \in \{0,1,\ldots,k\}$, there are at most $(k+1)^{(|A'|+|B'|+|C'|) \cdot k} = O(k^{2k^3})$ functions $h_1,\ldots,h_k$ that satisfy the conditions of the algorithm. It is possible to generate them in $O(2k^3 \cdot k^{2k^3}) = O(k^{2k^3+3})$, provided that we already know the values $|N(v) \cap A|$ and $|N(v) \cap R|$ for each $v \in A' \cup B' \cup C'$.

Fix a set $R$, integers $l_1,\ldots,l_k$, and functions $h_1,\ldots,h_k$ that satisfy the conditions of \Cref{alg:fptk}. By \Cref{lem:limadj,lem:dr}, the vertex set of the graph $H$ constructed in Line~11 has size $O(n \cdot (6k)^{k-1})$ and can be constructed in $O(n \cdot (6k)^{k+1})$ time. Its edge set can be constructed in $O(n \cdot (6k)^{k-1} \cdot (4k)^{2k+2}) \subseteq O(n \cdot (6k)^{3k+1})$ time.

The \textsc{Maximum Independent Set} problem asks, given a graph $G$ and an integer $k$, whether there is an independent set of $G$ of size $k$. It is a classical $\W[1]$-complete problem when parameterized solely by $k$. However, it has a well-known linear-time computable kernel of size $O((k-1) \cdot (\Delta(G)+1))$. This follows from the fact that every graph with at least $(k-1) \cdot (\Delta(G)+1) + 1$ vertices must contain an independent set of size at least $k$. Furthermore, there is a bounded search tree algorithm that solves the problem and runs in $O((\Delta(G)+1)^k \cdot n)$ time \cite{cygan2015}.

Computing the kernel, followed by running the bounded search tree algorithm on it, gives an $O(n + (\Delta(G)+1)^{k+1}\cdot k)$ time algorithm for \textsc{Maximum Independent Set}. Note that, by \Cref{lem:limadj}, the maximum degree of $H$ is $O((4k)^{2k+2})$. Thus, given $H$, it is possible to determine whether there is an independent set in $H$ of size $k$ in $O(n \cdot (6k)^k + (4k)^{2k^2+k+1})$ time.

The costs of computing and enumerating the combinatorial structures of \Cref{alg:fptk}, except those in Lines~11 and~12, are dominated by later terms and are absorbed by the $O$-notation. Multiplying the number of combinatorial choices in Lines~5,~9, and~10 by the sum of the costs of Lines~11 and~12 yields the time complexity $O((2k^2)^k \cdot 6^k \cdot k^{2k^3+3} \cdot (n \cdot (6k)^{3k+1} + n \cdot (6k)^k + (4k)^{2k^2+k+1})) \subseteq O(n \cdot k^{\Theta(k^3)})$ for \Cref{alg:fptk}.
\end{proof}

\begin{theorem}
\label{thm:fptk}
$f$-\nameref{prob:criticalSet} can be decided in $O(n \cdot k^{\Theta(k^3)})$ time.
\end{theorem}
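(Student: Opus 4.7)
The plan is to prove that Algorithm~\ref{alg:fptk} correctly decides the $f$-\nameref{prob:criticalSet} problem and then invoke Lemma~\ref{lem:time} for the running time. Correctness will be established in three pieces: (a) the preprocessing rejections in Lines~2--3 are sound, (b) a YES output always corresponds to an actual $f$-critical set of size at most $k$, and (c) every YES instance triggers a YES output for at least one branch of the algorithm's nested loops.

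Part (a) I would dispatch by quoting Lemma~\ref{lem:blclbounded} together with the earlier observation that any vertex $v$ with $f(v)>k$ and $d(v)-k\geq f(v)$ would be simultaneously forced into and excluded from every $f$-critical set of size at most $k$. For part (b), given the independent set of size $k$ returned in Line~13, I would take the corresponding sets $S_1,\ldots,S_k$, argue that the ``same class instance'' edges of $H$ force them to come from distinct positions (one from each class $\mathcal{Z}_R(l_i,h_i)$) and that the ``adjacency'' edges force them to be pairwise non-adjacent elements of $\mathcal{D}(R)$. Then Lemma~\ref{lem:ifonlyifpt1} gives Property~1 for $(R,S)$ with $S=S_1\cup\ldots\cup S_k$, and the classification identity $|N(v)\cap S_i|=h_i(v)$ combined with pairwise non-adjacency yields $|N(v)\cap S|=\sum_i h_i(v)$ for every $v\in A'\cup B'\cup C'$; the two inequalities enforced in Line~10 then translate, after adding the already-accounted contributions from $A$ and $R$, exactly into Properties~2 and~3, so that $A\cup R\cup S$ is an $f$-critical set of size at most $k$.

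For part (c), I would start from an arbitrary $f$-critical set $T$ of size at most $k$ and use the definitions of $A,B,C$ to write $T=A\cup R\cup S$ with $R\subseteq B'$ and $S\subseteq B\setminus B'$ satisfying Properties~1,~2, and~3. Then I would list the vertex sets $S_1,\ldots,S_j$ of the connected components of $G[S]$, pad with $S_{j+1}=\ldots=S_k=\varnothing$, and set $l_i=|S_i|$ and $h_i(v)=|N(v)\cap S_i|$. By Lemma~\ref{lem:ifonlyifpt1} each $S_i\in\mathcal{D}(R)$, the $S_i$ are pairwise non-adjacent (the empty set being non-adjacent to every set by definition), and the inequalities of Line~10 follow directly from Properties~2 and~3 once written in terms of the $h_i$. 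Picking the vertex of $H$ that corresponds to each $S_i$ in its own class instance then yields an independent set of size $k$, so the algorithm returns YES on this branch.

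The delicate step is the handling of repeated classes in the sequence $\mathcal{Z}_R(l_1,h_1),\ldots,\mathcal{Z}_R(l_k,h_k)$: the same class may appear in several positions, most visibly when several $S_i$ equal $\varnothing$, but potentially also when two components of $G[S]$ share the same size and the same contribution profile on $A'\cup B'\cup C'$. The convention surrounding $H$---one vertex per position, and no edge between two vertices from different positions unless the underlying sets are adjacent, which fails for two copies of $\varnothing$---is what makes the required independent set exist in this case, and I expect this to be the only point that needs explicit justification beyond invoking the lemmas already proved. Once correctness is settled, Lemma~\ref{lem:time} supplies the $O(n\cdot k^{\Theta(k^3)})$ bound and the theorem follows.
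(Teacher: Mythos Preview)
Your proposal is correct and takes essentially the same approach as the paper: both establish the theorem by arguing that \Cref{alg:fptk} is correct (via \Cref{lem:ifonlyifpt1}, \Cref{lem:blclbounded}, and the definition of the classes $\mathcal{Z}_R(l_i,h_i)$) and then invoke \Cref{lem:time} for the running time. Your write-up is simply a fuller unpacking of the two-line argument the paper gives, including the soundness/completeness split and the handling of repeated class positions, all of which the paper leaves implicit.
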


\begin{proof}
The correctness of \Cref{alg:fptk} follows from \Cref{lem:ifonlyifpt1} and the definition of the classes $\mathcal{Z}_R(l_i,h_i)$. Its time complexity follows from \Cref{lem:time}.
\end{proof}

\Cref{alg:fptk} is correct and runs in $\FPT$ time for the sole parameter $k$, but it is rather complex. When parameterized by $k+m(f)$ or $k+\Delta(G)$, there is, in fact, a much simpler $\FPT$-time algorithm for the $f$-\nameref{prob:criticalSet} problem.

\begin{lemma}
\label[lemma]{lemma:lowerbound}
Let $G$ be a graph and $f:V(G) \rightarrow \Nat$ a threshold function. If $S$ is an $f$-critical set of $G$, then $|S| \cdot m(f) \geq n$.
\end{lemma}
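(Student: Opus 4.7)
The plan is a standard double-counting of the edge cut $E(S, V \setminus S)$, exploiting the two sides of the critical-set condition (vertices outside $S$ must flip to $1$, and vertices in $S$ must not flip to $0$).

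First I would unpack the definition. Since $S$ is an $f$-critical set, in the process started by $S$ every vertex has state $1$ at time $t=1$. This gives two consequences: (a) every vertex $v \notin S$ had its state change from $0$ to $1$, so by Rule~\eqref{eq:updRule} it has at least $f(v)$ neighbors in $S$, i.e. $|N(v) \cap S| \geq f(v)$; and (b) every vertex $u \in S$ did \emph{not} change its state, so it has fewer than $f(u)$ neighbors with state $0$, i.e. $|N(u) \cap (V \setminus S)| \leq f(u) - 1 \leq m(f) - 1$.

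Next I would count the edges of $G$ between $S$ and $V \setminus S$ in both directions. From side (a), using $f(v) \geq 1$ for all $v$ (which is assumed throughout the paper), we obtain
\[
|E(S, V \setminus S)| = \sum_{v \notin S} |N(v) \cap S| \geq \sum_{v \notin S} f(v) \geq n - |S|.
\]
From side (b),
\[
|E(S, V \setminus S)| = \sum_{u \in S} |N(u) \cap (V \setminus S)| \leq \sum_{u \in S} (f(u) - 1) \leq |S| \cdot (m(f) - 1).
\]
Combining the two inequalities yields $n - |S| \leq |S|(m(f) - 1)$, and therefore $|S| \cdot m(f) \geq n$, as desired.

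There is no serious obstacle here; the only subtlety is making sure to use the correct reading of Rule~\eqref{eq:updRule} on both sides of the cut (strict inequality $<$ for vertices in $S$ that must stay at state $1$, and non-strict inequality $\geq$ for vertices outside $S$ that must flip), and to invoke the global assumption $f(v) \geq 1$ to get the lower bound $\sum_{v \notin S} f(v) \geq n - |S|$.
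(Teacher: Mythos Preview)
Your proof is correct and follows essentially the same double-counting argument as the paper: both bound the edge cut $E(S,V\setminus S)$ from below by $n-|S|$ (each outside vertex needs at least one neighbor in $S$) and from above by $|S|\cdot(m(f)-1)$ (each inside vertex has at most $m(f)-1$ neighbors outside). The paper phrases the upper bound via a pigeonhole contradiction rather than summing $f(u)-1$ over $u\in S$, but this is a cosmetic difference.
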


\begin{proof}
Let $S' = V(G) \setminus S$ and $F$ be the set of edges between a vertex in $S$ and a vertex in $S'$. On the one hand, $|F| \geq |S'| = n - |S|$, because every vertex in $S'$ must have at least one neighbor in $S$ to change its state at time $1$. On the other hand, $|F| \leq |S| \cdot (m(f)-1)$, since, otherwise, by the pigeonhole principle, there would be a vertex in $S$ with at least $m(f)$ neighbors in $S'$ and thus the state of this vertex would change to $0$ at time $1$. Therefore, $|S| \cdot (m(f)-1) \geq n - |S|$, which implies that $|S| \cdot m(f) \geq n$.
\end{proof}

\begin{theorem}
\label{thm:fptkmf}
$f$-\nameref{prob:criticalSet} can be decided in $O(n + (k \cdot m(f))^{k+2})$ and $O(n +(k \cdot \Delta(G))^{k+2})$.
\end{theorem}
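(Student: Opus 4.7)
The plan is to combine the lower bound from Lemma~\ref{lemma:lowerbound} with brute-force enumeration of candidate $f$-critical sets. First, I would observe that if $(G,f,k)$ is a \textsc{YES}-instance and $S$ is an $f$-critical set of $G$ with $|S|\leq k$, then Lemma~\ref{lemma:lowerbound} gives $n \leq |S|\cdot m(f) \leq k\cdot m(f)$. The algorithm therefore reads the input in $O(n)$ time and, if $n > k\cdot m(f)$, immediately outputs \textsc{NO}; otherwise $n \leq k\cdot m(f)$ and the rest of the computation can be bounded purely in terms of $k$ and $m(f)$.

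Next, I would enumerate all subsets $S\subseteq V$ of size at most $k$. There are at most $\binom{n}{k}\leq n^{k}\leq (k\cdot m(f))^{k}$ such subsets, and they can be generated in $O(k\cdot (k\cdot m(f))^{k})$ time. For each candidate $S$, I would check whether it is an $f$-critical set by simulating a single application of the update rule~\eqref{eq:updRule} and verifying that every vertex has state $1$ at time $t=1$. Since $n\leq k\cdot m(f)$, this verification costs $O(n+|E(G)|) \subseteq O((k\cdot m(f))^{2})$ time. Multiplying the number of candidates by the verification cost and adding the initial $O(n)$ pass gives the target bound $O(n+(k\cdot m(f))^{k+2})$.

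For the bound in terms of $\Delta(G)$, I would either appeal directly to $m(f)\leq \Delta(G)+1$, yielding $(k\cdot m(f))^{k+2}\leq (k(\Delta(G)+1))^{k+2}$ and reusing the same algorithm verbatim, or prove a parallel lower bound by an edge-counting argument: any $f$-critical set $S$ must supply at least one neighbor to every vertex in $V\setminus S$ (because $f(v)\geq 1$), so the number of edges between $S$ and $V\setminus S$ is at least $|V\setminus S|$ and at most $|S|\cdot \Delta(G)$, which forces $n\leq k(\Delta(G)+1)$. The enumeration and verification step is identical, so the running time becomes $O(n+(k\cdot \Delta(G))^{k+2})$.

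There is no serious obstacle to overcome: the whole statement is essentially a kernelization-plus-brute-force argument built on top of Lemma~\ref{lemma:lowerbound}. The only care needed is to check that the verification cost is absorbed by the enumeration budget, which is immediate once one knows that $n$ is already bounded by $k\cdot m(f)$ (or by $k(\Delta(G)+1)$) after the initial sanity check.
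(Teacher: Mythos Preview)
Your proposal is correct and matches the paper's proof almost verbatim: both invoke Lemma~\ref{lemma:lowerbound} to conclude that $n\le k\cdot m(f)$ in any \textsc{YES}-instance, then brute-force over all at most $(k\cdot m(f))^{k}$ candidate sets with an $O((k\cdot m(f))^{2})$ check each, and derive the $\Delta(G)$ bound from $m(f)\le \Delta(G)+1$. Your alternative direct edge-counting argument for the $\Delta(G)$ bound is a harmless extra.
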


\begin{proof}
Let $\Delta = \Delta(G)$. If $k \cdot m(f) < n$, then by \Cref{lemma:lowerbound} we are dealing with a \textsc{NO}-instance. If $k \cdot m(f) \geq n$, a brute-force algorithm solves the problem and runs in $O(n + (k \cdot m(f))^{k+2})$ time, since for each set $S \subseteq V$ of size at most $k$, it is possible to check whether $S$ is an $f$-critical set of $G$ in $O(n+m) \subseteq O((k \cdot m(f))^2)$. Furthermore, since $m(f) \leq \Delta+1$, the problem can be decided in $O(n +(k \cdot \Delta)^{k+2})$.
\end{proof}
\section{Final remarks}
\label{sec:finalremarks}

In this paper, we studied the parameterized complexity of the $f$-\nameref{prob:criticalSet} problem with respect to the parameters $k$, $tw(G)$, $m(f)$, and $\Delta(G)$. Our results reveal a precise boundary in the parameterized complexity of the reversible $f$-\nameref{prob:criticalSet} problem for these parameters.

We proved that the problem is $\NP$-complete even when restricted to planar subcubic bipartite graphs with maximum threshold~$2$. Moreover, we showed that it is $\W[1]$-hard when parameterized by the treewidth $tw(G)$, which is surprising, as many problems are $\FPT$ in this parameter. On the positive side, we established that the problem becomes fixed-parameter tractable when parameterized by the combined parameters $tw(G)+m(f)$ and $tw(G)+\Delta(G)$. Furthermore, we showed that the problem is in $\FPT$ when parameterized by $k$, which contrasts with the irreversible-process setting, where the same problem is $\W[2]$-hard due to the equivalence between $1$-critical sets in irreversible processes and dominating sets.

These results provide a better understanding of the complexity landscape of problems involving $f$-reversible processes, highlighting that identifying critical sets of size at most $k$ behaves differently from the classical conversion set problem: the former is fixed-parameter tractable when parameterized by $k$, while the latter is $\W[1]$-hard under the same parameter.

An interesting open question concerns the complexity of the $f$-\nameref{prob:criticalSet} problem parameterized by the treewidth $tw(G)$ in the irreversible setting, which is listed as open in \Cref{tab:results}. The $\FPT$ algorithms obtained for the combined parameters $tw(G)+m(f)$ and $tw(G)+\Delta(G)$ follow mainly from MSOL formulations and Courcelle's theorem. Although our results focus on reversible processes, similar logical formulations appear possible for irreversible processes, suggesting that these techniques may also be applicable in that setting. Moreover, the reduction used to prove $\W[1]$-hardness when the problem is parameterized by $tw(G)$ seems adaptable to the irreversible case with only minor modifications.

We believe the reversible model deserves further investigation, since it more accurately captures real-world influence dynamics than the irreversible model in many situations. Although we provided an $\FPT$ algorithm for the parameter $k$, the dependence on $k$ is quite high ($O(k^{\Theta(k^3)})$) due to the cubic size of some of the combinatorial structures it has to branch. Despite the fact that this high dependence on $k$ is mostly due to the cubic size of the combinatorial choices involved, we believe there is room for improvement, especially since the only known ETH-based lower bound for the problem is $poly(n+m) \cdot 2^{\Theta(k)}$ due to the $\NP$-completeness proof in~\cite{lima2018}. So, we leave as an open problem whether the $f$-\nameref{prob:criticalSet} admits a polynomial kernel when parameterized by $k$, or whether such kernels can be ruled out under standard assumptions such as ETH.

Another direction is to determine the parameterized complexity of the problem on sparse graph classes (e.g., planar graphs). Finally, it would be interesting to investigate the existence of approximation algorithms for minimizing $f$-critical sets in reversible processes, which would reveal whether the problem admits efficient global optimization or inherently resists approximation. 

\section*{Acknowledgments}

This study was financed by the Conselho Nacional de Desenvolvimento Científico e Tecnológico (CNPq) -- Universal [422912/2021-2].

\bibliographystyle{elsarticle-num} 
\bibliography{refs}

\end{document}